\newtheorem{theorem}{\bf{Theorem}}
\newtheorem{condition}{\bf{Assumption}}
\newtheorem{definition}{\bf{Definition}}
\newtheorem{example}{\bf{Example}}
\newtheorem{lemma}{\bf{Lemma}}
\newtheorem{problem}{\bf{Problem}}
\newtheorem{proposition}{\bf{Proposition}}
\newtheorem{remark}{\bf{Remark}}
\begin{document} 
	 
\title{\Large Attack-Resilient Distributed Algorithms for Exponential Nash Equilibrium Seeking}
 
\author
{Zhi Feng and Guoqiang Hu 


\thanks{
This work was supported by Singapore Ministry of Education Academic Research Fund Tier 1 RG180/17(2017-T1-002-158).
Z. Feng and G. Hu are with the School of Electrical and Electronic Engineering, Nanyang Technological University, Singapore 639798 (E-mail: zhifeng@ntu.edu.sg; gqhu@ntu.edu.sg). 
}  
} 
 
 
\maketitle 

\begin{abstract}
This paper investigates a resilient distributed Nash equilibrium (NE) seeking problem on a directed communication network subject to malicious cyber-attacks. The considered attacks, named as Denial-of-Service (DoS) attacks, are allowed to occur aperiodically, which refers to  interruptions of communication channels carried out by intelligent adversaries. In such an insecure network environment, the existence of cyber-attacks may result in  undesirable performance degradations or even the failures of distributed algorithm to seek the NE of noncooperative games. Hence, the aforementioned setting can improve the practical relevance of the problem to be addressed and meanwhile, it poses some technical challenges to the distributed algorithm design and exponential convergence analysis. In contrast to the existing distributed NE seeking results over a prefect communication network, an attack-resilient distributed algorithm is presented such that the NE can be exactly reached with an exponential convergence rate in the presence of DoS attacks. Inspired by the previous works in \cite{Hu15Cyber,Hu15IFAC,Hu15IJNRC,Hu17ACC,Hu19TCST,Hu20Tcyber}, an explicit analysis of the attack frequency and duration is investigated to enable exponential NE seeking with resilience against attacks.  
Examples and numerical simulation results are given to show the effectiveness of the proposed design. 
\end{abstract}

\vspace*{-3pt} 
\begin{IEEEkeywords}
Distributed algorithm, NE seeking, Directed graph, Exponential convergence,  Cyber-attack, Resilience. 
\end{IEEEkeywords}

\IEEEpeerreviewmaketitle

\vspace*{-8pt}
\section{Introduction}
\vspace*{-3pt}
Recently, distributed Nash equilibrium (NE) seeking of non-cooperative games has been attracting increasingly attention due to its broad applications in multi-robot systems \cite{Basar87AT}, mobile sensor networks \cite{Johansson12TAC}, smart grids \cite{Ye17Tcyber}, and so on. In contrast to early works (e.g., \cite{Shamma05TAC,Basar12TAC,Scutari04TIT}) with a complete information setting, players in distributed NE seeking have limited local information, i.e.,  each player needs to make the decision based on the local or relative information, e.g., information from its neighbors, to optimize its own objective function. Thus, the players involved in the game are required to communicate with each other over the network to estimate other players' actions.

\textit{Literature review:} various distributed continuous- and  discrete-time algorithms are developed in \cite{Ye17TAC,Ye19TAC,Pavel19TAC,Sun18,Liang17AT,Deng19TNNLS,Wang19ICCA,Xie13Conf,Pavel16TAC,Pavel19AT,Pang20TAC} to solve distributed NE seeking problems in games, in which each player cannot observe all other players' actions, but can exchange information between neighbors over an undirected graph or weight-balanced digraph. Gradient-based NE seeking algorithms are popular techniques to find the NE of games with differentiable objective functions where each player modifies its current action based on the gradient with respect to its own action. The distributed NE seeking issue with continuous-time agent dynamics is studied in  \cite{Ye17TAC,Ye19TAC,Pavel19TAC,Sun18,Liang17AT,Deng19TNNLS,Wang19ICCA}. In particular, distributed NE seeking strategies are proposed in \cite{Ye17TAC} and \cite{Ye19TAC} by combining the leader-follower based consensus algorithms and gradient-play strategies over an undirected and connected graph. The authors in \cite{Pavel19TAC} exploit some incremental passivity properties of pseudo-gradients to illustrate that the estimates of the proposed augmented gradient dynamics converge to the NE exponentially under the graph coupling condition. To remove this condition, the  two time-scale singular perturbation analysis is further developed. Distributed NE seeking of aggregative games has been studied in \cite{Liang17AT,Deng19TNNLS,Wang19ICCA}, where each player' objective function relies on its own action and an aggregate of all players' actions. 
On the other hand, discrete-time consensus algorithms are presented in \cite{Xie13Conf} to search for the NE of congestion games, where each player has the linear cost function. In \cite{Pavel16TAC}, the  gossip-based algorithm is developed to seek the NE.
An alternating direction method of multipliers' algorithm with the constant step-size is proposed in \cite{Pavel19AT} by exploiting 
the convex and smooth properties of pseudogradients. Recently, \cite{Pang20TAC} adopts consensus-based gradient-free NE seeking algorithms for limited cost function knowledge. Distributed algorithms via diminishing and constant step-sizes are studied, where the former 
ensures an almost sure NE convergence, while the latter provides convergence to the neighborhood of the NE.     

One observation in distributed NE seeking problems is that all players require to find the NE via information exchange between neighbors through a communication network. Unfortunately, due to malicious cyber-attacks such as DoS attacks, deception attacks (false data injection, replay attack), disclosure attacks   \cite{Mo14SP,Feng17AT,Yang19TAC,Chen19SMC,Zhang19Tcyber,Hu15IFAC,Hu15Cyber,Hu15IJNRC,Hu17ACC,Hu19TCST,Hu20Tcyber}, the secure network environment is hardly guaranteed in practice. The network security plays a fundamental yet critical role in successful information transmission. The malicious attacks interrupt, incorrect, or tamper transmitting information so that efficiency of distributed NE seeking algorithms is degraded significantly, and it might even lead to the failure of seeking the NE under malicious attacks. In light of wide applications of distributed algorithms in cyber-physical systems (safety-critical), and inspired by studies of security issues in many existing works (e.g., see \cite{Mo14SP,Feng17AT,Yang19TAC,Chen19SMC,Zhang19Tcyber,Hu15IFAC,Hu15Cyber,Hu15IJNRC,Hu17ACC,Hu19TCST,Hu20Tcyber}), it is highly desirable to determine how resilient distributed NE seeking algorithms will be against malicious attacks based on the fact that distributed algorithms are easily disrupted by malicious behaviors. \textit{Thus, the main objective of this work is to address attack-resilient problems of distributed NE seeking so as to provide certain safety and resilience performances against attacks. So far, few efforts are made on resilient distributed NE seeking.} 
This paper focuses on the attack-resilient research of distributed NE seeking of a non-cooperative game. In particular, an attack-resilient distributed algorithm is proposed to solve the NE seeking problem in the  multi-player non-cooperative game over a directed communication network under malicious DoS attacks. The major contributions of this paper are as follows. Firstly, to the best of our knowledge, this paper is the first work to solve this issue. The proposed distributed NE seeking algorithm is capable of exactly seeking the NE in the presence of DoS attacks. The exponential convergence of the proposed algorithm is rigorously guaranteed, provided that the frequency and duration of attacks satisfy certain bounded conditions. Moreover, compared with existing NE seeking works requiring ideal communication in \cite{Ye17TAC,Ye19TAC,Pavel19TAC,Sun18,Liang17AT,Deng19TNNLS,Wang19ICCA,Xie13Conf,Pavel16TAC,Pavel19AT,Pang20TAC}, we develop attack-resilient distributed NE seeking algorithms to search for the NE in adversarial network environments. The algorithms employ consensus-based pesudo-gradient strategies with a hybrid system method to constrain DoS  attacks. The explicit analysis is provided according to the Lyapunov stability. In addition, different from the distributed convex optimization works in \cite{Su16ACC,Sundaram19TAC,Zhao20TAC} that consider faults on nodes and require the removal of their states to be prior known, this paper investigates the more practical DoS attack on communication network. The attacks are time-sequence based and allowed to occur aperiodically. Another contribution of this work is that unlike results in \cite{Ye17TAC,Ye19TAC,Pavel19TAC,Sun18,Liang17AT,Deng19TNNLS,Wang19ICCA,Xie13Conf,Pavel16TAC,Pavel19AT,Pang20TAC} and \cite{Su16ACC,Sundaram19TAC,Zhao20TAC} over an undirected graph or weight-balanced digraph, the strongly connected directed graph is allowed here, and in the presence of malicious attacks, this directed graph can be disrupted or totally paralyzed.  
%
%

The paper is organized as follows. Section II gave  mathematical preliminaries. In Section III, the non-cooperative game and DoS attack model are described, and the main objective is presented. The attack-resilient distributed algorithm is proposed in Section IV, where the NE seeking results with an exponential convergence analysis are provided. Examples and numerical simulation results are given in Section V, followed by the conclusion in Section VI.    
 
\vspace*{-6pt} 
\section{Preliminaries} 
\vspace*{-10pt} 
\subsection{Notation \label{Notation}} 
Denote $\mathbb{R}$, $\mathbb{R}^{n}$, and $\mathbb{R}^{n\times m}$ as the sets of the real numbers, real $n$-dimensional vectors and real $n\times m$ matrices, respectively. Denote $ \mathbb{R}_{\geq 0} $ as the set of nonnegative numbers, while $ \mathbb{N^{+}} $ represents the set of positive integers.  Let $0_{n}$ ($1_{n}$) be the $n\times 1$ vector with all zeros (ones) and $I_{n}$ be the identity matrix. Let col$(x_{1},...,x_{n})$ and diag$\{a_{1},...,a_{n}\}$ be a column vector with entries $x_{i}$ and a diagonal matrix with entries $a_{i}$, $i=1,2,\cdots ,n$, respectively. The symbols $\otimes $ and $\left\Vert \cdot \right\Vert $ represent the Kronecker product and the Euclidean norm, respectively. Given a real symmetric matrix $M$, let $ M > 0 $ ($ M \geq 0 $) denote that $ M $ is positive (or positive semi-definite), and $\lambda _{\min }(M)$, $\lambda _{\max }(M)$ are its minimum and maximum eigenvalues, respectively. 
For a function $f$, it is said to be $\mathcal{C}^{m}$ if it is $ m $th continuously differentiable. For two sets ${X}$ and ${Y}$, ${Y} \backslash  {X}$  denotes the set of elements belonging to $Y,$ but not to $X.$ 

\vspace*{-11pt}
\subsection{Convex analysis \label{ConvexAnalysis}}
A function $ f: \mathbb{R}^{n} \rightarrow  \mathbb{R} $ is convex if $ f(ax+(1-a)y) \leq af(x)+(1-a)f(y)$ for any scalar $a \in [0,1] $ and vectors $ x,y \in \mathbb{R}^{n}$. $ f $ is locally Lipschitz on $ \mathbb{R}^{n}$ if it is locally Lipschitz at $ x $ for $\forall x \in \mathbb{R}^{n} $. If $ f $ is a differentiable function, $ \triangledown f $ denotes the gradient of $ f $. 
A vector-valued function (or mapping) $ F: \mathbb{R}^{n} \rightarrow \mathbb{R}^{n}$ is said to be $ \iota_{F} $-Lipschitz continuous if, for any $ x,y \in \mathbb{R}^{n}$, $ \| F(x)-F(y)\| \leq \iota_{F} \| x-y \| $. Function $ F: \mathbb{R}^{n} \rightarrow \mathbb{R}^{n}$ is (strictly) monotone if, for any $ x,y \in \mathbb{R}^{n}$, $ (x-y)^{T} (F(x)-F(y)) (>) \geq 0 $. Further, $ F $ is a $ \varepsilon $-strongly monotone, if for any scalar $ \varepsilon>0 $, and $ x,y \in \mathbb{R}^{n}$, $ (x-y)^{T} (F(x)-F(y)) > \varepsilon \|x-y\|^{2} $.

\vspace*{-11pt}
\subsection{Graph Theory\label{Graph theory}} 
Let $\mathcal{\bar{G}}$ $=$ $\left\{ \mathcal{V},\mathcal{\bar{E}}  \right\} $ denote a directed graph, where $\mathcal{V}$ $\in $ $\left\{ 1,...,N\right\} $ is a set of nodes and  $\mathcal{\bar{E}}$ $\subseteq$ $ \mathcal{V\times V}$ is a set of edges. 
An edge $(i,j)\in \mathcal{\bar{E}}$ denotes that $ i $th agent receives the information from $ j $th agent, but not vice versa. 
$\mathcal{\bar{N}}_{i}$ $=$ $%
\left\{ j\in \mathcal{V\mid }(j,i)\in \mathcal{\bar{E}}\right\} $ is a neighborhood set of the agent $i$. 
A directed graph is strongly connected if there exists a directed path connecting every pair of nodes. The matrix $\bar{A}=\left[\bar{a}_{ij}\right]$ denotes the adjacency matrix of $\mathcal{\bar{G}}$, where $\bar{a}_{ij}>0$ if and only
if $(j,i)\in \mathcal{\bar{E}}$, else $\bar{a}_{ij}=0$. A matrix $\mathcal{\bar{L}}=[\bar{l}_{ij}] \in\mathbb{R}^{N\times N}$ is called the Laplacian matrix of $\mathcal{\bar{G}}$, where $ \bar{l}_{ii}=\sum^{N}_{j=1}\bar{a}_{ij} $ and $ \bar{l}_{ij}= -\bar{a}_{ij}$, $ i \neq j$. Denote $\mathcal{\bar{L}} \triangleq \bar{D}-\bar{A}$, where $\bar{D}=\left[ \bar{d}_{ii}\right] $ is the  diagonal matrix with $\bar{d}_{ii}=\sum\nolimits_{j=1}^{N}\bar{a}_{ij}
$. 
Similarly, let $\mathcal{G}$ denote a weighted digraph, where $ a_{ij}=\omega_{i}\bar{a}_{ij} $ for a scalar $ \omega_{i}>0 $. With the same definition of $ \mathcal{\bar{L}} $, the Laplacian matrix of this new digraph  becomes $ \mathcal{L}=\mathcal{W} \mathcal{\bar{L}} $, where $\mathcal{W} = \text{diag}\{\omega_{1},\cdots,\omega_{N} \}$.

\begin{condition} \label{AssumptionGraph}
The directed graph $\mathcal{\bar{G}}$ is strongly connected. 	
\end{condition}



\begin{lemma}\label{lemma1} \cite{BookLi14}
By Assumption \ref{AssumptionGraph}, the Laplacian matrix $ \mathcal{\bar{L}} $ of 
$\mathcal{\bar{G}}$ has the following properties. 
	
\begin{enumerate}
\item $ \mathcal{\bar{L}} $ has a simple zero eigenvalue associated with right eigenvector 
$ 1_{N} $, and nonzero eigenvalues have positive real parts; 

\item $\omega = \text{col}(\omega_{1},\cdots, \omega_{N})$ with $ \omega^{T}1_{N} =1 $ is a left eigenvector of $ \mathcal{\bar{L}} $ associated with the zero eigenvalue. 
Then, $ \omega^{T}\mathcal{\bar{L}}=0^{T}_{N} $ and $ \min_{\zeta^{T}x=0, \ x\neq 0} \frac{x^{T} \mathcal{\hat{L}} x}{x^{T}x} > \frac{\lambda_{2}(\mathcal{\hat{L}})}{N}$, where $ \mathcal{\hat{L}}=\frac{1}{2} (\mathcal{L}+\mathcal{L}^{T})$ and $ \zeta $ is any vector with positive entries. Moreover, $ \omega=1_{N} $ if and only if $\mathcal{\bar{G}}$ is strongly connected and weight-balanced. 		
\end{enumerate}	
\end{lemma}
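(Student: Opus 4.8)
The plan is to prove the two statements separately: the first via the Perron--Frobenius theorem applied to a shifted adjacency matrix, and the second by combining a left-eigenvector argument with an orthogonal decomposition and a Cauchy--Schwarz estimate.

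For the first statement, I would begin from $\mathcal{\bar{L}}1_{N}=0_{N}$, which is immediate from $\bar{l}_{ii}=\sum_{j\neq i}\bar{a}_{ij}$. Setting $M:=d_{\max}I_{N}-\mathcal{\bar{L}}$ with $d_{\max}:=\max_{i}\bar{d}_{ii}$, the matrix $M$ is entrywise nonnegative and, because $\mathcal{\bar{G}}$ is strongly connected (Assumption~\ref{AssumptionGraph}), irreducible. Since $M1_{N}=d_{\max}1_{N}$ with $1_{N}$ entrywise positive, Perron--Frobenius forces $\rho(M)=d_{\max}$ and that this eigenvalue is simple, so $0$ is a simple eigenvalue of $\mathcal{\bar{L}}$ with right eigenvector $1_{N}$. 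Every other eigenvalue $\lambda$ of $\mathcal{\bar{L}}$ yields an eigenvalue $\mu=d_{\max}-\lambda$ of $M$ with $|\mu|\le\rho(M)=d_{\max}$, so $\lambda$ lies in the closed disc of radius $d_{\max}$ centred at $d_{\max}$, which meets the imaginary axis only at the origin; hence $\operatorname{Re}(\lambda)\ge 0$, with equality only for $\lambda=0$, i.e.\ $\operatorname{Re}(\lambda)>0$ for every nonzero eigenvalue.

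For the second statement, the same Perron--Frobenius argument applied to $M^{T}$ furnishes an entrywise positive left null vector of $\mathcal{\bar{L}}$; rescaling it so its entries sum to one gives $\omega$ and $\omega^{T}\mathcal{\bar{L}}=0_{N}^{T}$. Since the left null space of $\mathcal{\bar{L}}$ is one-dimensional by the simplicity established above, $\omega$ is proportional to $1_{N}$ if and only if $1_{N}^{T}\mathcal{\bar{L}}=0_{N}^{T}$, which is exactly the weight-balanced condition, giving the stated equivalence. To prepare the Rayleigh-quotient bound I would then note — using $\mathcal{W}1_{N}=\omega$, $\mathcal{\bar{L}}1_{N}=0_{N}$ and $\omega^{T}\mathcal{\bar{L}}=0_{N}^{T}$ — that $\mathcal{\hat{L}}=\tfrac{1}{2}(\mathcal{W}\mathcal{\bar{L}}+\mathcal{\bar{L}}^{T}\mathcal{W})$ is symmetric with nonpositive off-diagonal entries and zero row sums, hence is the Laplacian of the connected undirected mirror graph of $\mathcal{\bar{G}}$; therefore $\mathcal{\hat{L}}\ge 0$, $\mathcal{\hat{L}}1_{N}=0_{N}$, and its zero eigenvalue is simple, so its spectrum orders as $0=\lambda_{1}(\mathcal{\hat{L}})<\lambda_{2}(\mathcal{\hat{L}})\le\cdots\le\lambda_{N}(\mathcal{\hat{L}})$.

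Finally, for the inequality I would take any $x\neq 0$ with $\zeta^{T}x=0$ and write $x=c1_{N}+y$ with $1_{N}^{T}y=0$, so that $x^{T}\mathcal{\hat{L}}x=y^{T}\mathcal{\hat{L}}y\ge\lambda_{2}(\mathcal{\hat{L}})\|y\|^{2}$ and $\|x\|^{2}=Nc^{2}+\|y\|^{2}$; note $y\neq 0$, since $y=0$ and $\zeta^{T}1_{N}>0$ would force $x=0$. The constraint reads $c\,\zeta^{T}1_{N}=-\tilde{\zeta}^{T}y$ with $\tilde{\zeta}:=\zeta-\tfrac{\zeta^{T}1_{N}}{N}1_{N}$ orthogonal to $1_{N}$, and since $\|\tilde{\zeta}\|^{2}=\|\zeta\|^{2}-(\zeta^{T}1_{N})^{2}/N$, Cauchy--Schwarz gives $Nc^{2}+\|y\|^{2}\le\tfrac{N\|\zeta\|^{2}}{(\zeta^{T}1_{N})^{2}}\|y\|^{2}$, whence $\dfrac{x^{T}\mathcal{\hat{L}}x}{x^{T}x}\ge\dfrac{\lambda_{2}(\mathcal{\hat{L}})(\zeta^{T}1_{N})^{2}}{N\|\zeta\|^{2}}$. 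Because $\zeta$ has strictly positive entries and $N\ge 2$, $(\zeta^{T}1_{N})^{2}=\big(\sum_{i}\zeta_{i}\big)^{2}>\sum_{i}\zeta_{i}^{2}=\|\zeta\|^{2}$, so the right-hand side strictly exceeds $\lambda_{2}(\mathcal{\hat{L}})/N$, which is the claim. I expect the crux to be this last estimate: one must project $\zeta$ onto $1_{N}^{\perp}$ before applying Cauchy--Schwarz, since the naive bound is too loose to beat $\lambda_{2}(\mathcal{\hat{L}})/N$; by contrast, verifying the spectral structure of $\mathcal{\hat{L}}$ and invoking Perron--Frobenius for the first statement are routine.
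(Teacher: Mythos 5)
Your proposal is correct. Note, however, that the paper does not prove this lemma at all: it is stated as a known result imported from \cite{BookLi14}, so there is no in-paper argument to compare against. Your derivation is a sound, self-contained version of the standard textbook route: Perron--Frobenius applied to $d_{\max}I_{N}-\mathcal{\bar{L}}$ (and its transpose) gives the simple zero eigenvalue, the disc argument for positive real parts, and the positive left eigenvector $\omega$; the observation that $\mathcal{\hat{L}}=\frac{1}{2}(\mathcal{W}\mathcal{\bar{L}}+\mathcal{\bar{L}}^{T}\mathcal{W})$ is the Laplacian of a connected undirected mirror graph justifies $0=\lambda_{1}<\lambda_{2}$; and your projected Cauchy--Schwarz step (working with $\tilde{\zeta}=\zeta-\frac{\zeta^{T}1_{N}}{N}1_{N}$ rather than $\zeta$ itself) is exactly the idea needed to beat $\lambda_{2}(\mathcal{\hat{L}})/N$, and in fact yields the sharper explicit bound $\lambda_{2}(\mathcal{\hat{L}})(\zeta^{T}1_{N})^{2}/(N\|\zeta\|^{2})$. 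Two cosmetic points: the lemma's phrasing ``$\omega=1_{N}$'' is inconsistent with the normalization $\omega^{T}1_{N}=1$, and your reading of it as proportionality (equivalently $\omega=\frac{1}{N}1_{N}$) is the correct fix; and your strict inequality implicitly uses $N\geq 2$ and $\lambda_{2}(\mathcal{\hat{L}})>0$, both of which hold under Assumption \ref{AssumptionGraph}, so no gap results.
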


\vspace*{-2pt}   
\section{Problem Formulation}
\vspace*{-2pt} 
\subsection{Non-cooperative Game over Networks} 
\vspace*{-2pt}  
In this paper, we consider a multi-agent network consisting of $N$ players, which form a N-player non-cooperative game defined as follows. For each agent $ i \in \mathcal{V}$, the $ i $th player aims to minimize its cost function $ J_{i}(x_{i},x_{-i}): \mathbb{R}^{n_{i}} \rightarrow \mathbb{R} $ by choosing its strategy $ x_{i} \in \mathbb{R}^{n_{i}} $, and $ x_{-i}=\text{col}(x_{1},\cdots,x_{i-1},\cdots,x_{N}) $ is the strategy profile of the whole strategy profile except for player $ i $. Let $ x= \\ (x_{i},x_{-i}) $ represent all players' action profile. Alternatively, let $ x \\ =  \text{col}(x_{1},\cdots,x_{N}) \in \mathbb{R}^{n} $, $ n=\sum_{i\in \mathcal{V}}n_{i} $.   

The concept of Nash equilibrium is given below.

\begin{definition} \label{NEDefinition} \textit{(Nash equilibrium)}
A strategy profile $ x^{*}=(x^{*}_{i}, \\ x^{*}_{-i}) \in \mathbb{R}^{n}$ is said to be an Nash equilibrium of the game if 
\vspace*{-3pt}
\begin{equation}
J_{i}(x^{*}_{i},x^{*}_{-i}) \leq J_{i}(x_{i},x^{*}_{-i}), \ \text{for} \ \forall   x_{i} \in \mathbb{R}^{n_{i}}, \ i\in \mathcal{V}. \label{NECondition} 
\end{equation}
\end{definition} 

Condition (\ref{NECondition}) means that all players simultaneously take their own best (feasible) responses at the NE $ x^{*}$, where no player can unilaterally decrease its cost by changing its strategy. 


\begin{condition} \label{AssumptionConvex}
For each player $ i $, $ J_{i}(x_{i},x_{-i}) $ is $ \mathcal{C}^{2} $, strictly convex, and radially unbounded in $ x_{i} $ for each $ x_{-i} $. 
\end{condition}

Under Assumption \ref{AssumptionConvex}, it follows from \cite{Başar95} that an NE $ x^{*}$ exists, and satisfies $  \triangledown_{i} J_{i}(x^{*}_{i},x^{*}_{-i})=0_{n_{i}}$, and $ \triangledown_{i} J_{i}(x_{i},x_{-i})=\partial J_{i}(x_{i}, \\ x_{-i})/  \partial x_{i} \in \mathbb{R}^{n_{i}}$ represents the partial gradient of player $ i $'s cost with respect to its own action $ x_{i} $. We define 
\vspace*{-3pt}
\begin{equation}
F(x) \triangleq \text{col}( \triangledown_{1} J_{1}(x_{1},x_{-1}), \cdots,  \triangledown_{N} J_{N} (x_{N} \\ , x_{-N})), 
\end{equation} 
where $ F(x) \in \mathbb{R}^{n}$ denotes the pseudo-gradient (the stacked vector of all players' partial gradient). Thus, we have $ F(x^{*})=0_{n} $.

\begin{condition} \label{AssumptionSmooth}
The pesudogradient $ F $ is $ \varepsilon $-strongly monotone and $ \iota_{F} $-Lipschitz continuous for certain  constants $ \varepsilon, \iota_{F} >0 $.  
\end{condition}

\begin{remark}
Assumptions \ref{AssumptionConvex} and \ref{AssumptionSmooth} were widely used in existing works (e.g., \cite{Pavel19TAC,Sun18,Liang17AT,Deng19TNNLS,Wang19ICCA,Xie13Conf,Pavel16TAC,Pavel19AT,Pang20TAC}) to guarantee the unique NE $ x^{*} $. 
\end{remark}

In a game with perfect information (a complete communication graph) requiring the global knowledge on all players' actions, a gradient-play  algorithm  $ \dot{x}_{i}=- \triangledown_{i} J_{i}(x_{i},x_{-i})$ can be used to seek the NE asymptotically (Lemma 2, \cite{Pavel19TAC}). However, this design is impractical if the communication network is not complete. Then, a  distributed algorithm is desirable to broadcast the local information among players. Unfortunately,  when distributed computations are required and executed over insure communication networks under cyber-attacks, the exact NE may not be found.  

\subsection{Malicious DoS Attack Model}
As studied in the preliminary results \cite{Hu15IFAC,Hu15Cyber,Hu15IJNRC,Hu17ACC,Hu19TCST,Hu20Tcyber}, the well-known DoS attack refers to a class of malicious cyber-attacks where an attacker aims to corrupt and interrupt certain or all components of communication channels between the players.   
Without loss of generality, we suppose that each adversary follows an independent attacking strategy with a varying active period. Due to the limited energy, then it terminates attacking activities and shifts to a sleep period to supply its energy for next attacks.

To be specific, suppose that there exists an $m\in\mathbb{N}$ and denote $\{a^{ij}_{m}\}_{m\in\mathbb{N}}$ as an attack sequence when a DoS attack is lunched at channel $(i,j)$. Let the duration of this attack be $\tau^{ij}_{m}\geq 0$. 
Then, the $m$-th  attack strategy is generated with  $\mathcal{A}^{(i,j)}_{m} = {a^{ij}_{m}} \cup [a^{ij}_{m},a^{ij}_{m}+\tau^{ij}_{m})$ where $a^{ij}_{m+1}>a^{ij}_{m}+\tau^{ij}_{m}$ for $m\in \mathbb{N^{+}}.$ Given $t\geq t_{0} $, the sets of time instants where communication between $ (i, j) $ is denied, are described in the following form \cite{Hu17ACC,Hu19TCST,Hu20Tcyber}:  
\vspace*{-1pt}
\begin{equation}
\Xi^{(i,j)}_{a}(t_{0},t)=(\cup^{\infty}_{m=1} \ \mathcal{A}^{(i,j)}_{m})  \cap  \lbrack t_{0},t], \ m\in\mathbb{N^{+}},
\label{DoSAttacks}
\end{equation}
which implies that the sets where communication between $ (i, j) $ is allowed, are: $\Xi^{(i,j)} _{s}(t_{0},t)=[t_{0} ,t] \setminus \Xi ^{(i,j)}_{a}(t_{0}, t)$. Then,  $|\Xi_{a}(t_{0},t)|  \\ = | {\cup}_{(i,j)\in \mathcal{E}} \Xi^{(i,j)}_{a}(t_{0},t)| $ is the total length of attacker being active over $ [t_{0},t] $, while $ |\Xi_{s}(t_{0},t)|=[t_{0},t] \setminus |\Xi_{a}(t_{0},t)| $ denotes the total length of attacker being sleeping over $ [t_{0},t] $.


\vspace*{2pt} 
\begin{definition}\label{Attack Frequency}\textit{(Attack Frequency)} For any $T_{2}>T_{1}\geq t_{0}$, let $N_{a}(T_{1},T_{2})$ denote the total number of DoS attacks over $[T_{1},T_{2})$. Then,   $F_{a}(T_{1},T_{2})=\frac{N_{a}(T_{1},T_{2})}{T_{2}-T_{1}}$ denotes the attack frequency over $[T_{1},T_{2})$ for $\forall T_{2}>T_{1}\geq t_{0}$, where there exists scalars $ N_{0}, T_{f} >0 $ so that $ N_{a}(T_{1},T_{2}) \leq N_{0}+ (T_{2}-T_{1})/T_{f}. $
\end{definition}

\vspace*{2pt}
\begin{definition}\label{Attack Duration}\textit{(Attack Duration)} For any $T_{2}>T_{1}\geq t_{0},$ denote $|\Xi_{a}(T_{1},T_{2})|$ as the total time interval under attacks over $[T_{1},T_{2}).$ The attack duration over $[T_{1},T_{2})$ is defined as: there exist scalars $T_{0}\geq 0$, $T_{a}>1$ so that $|\Xi_{a}(T_{1},T_{2})| \leq T_{0}+(T_{2}-T_{1})/T_{a}.$
\end{definition}

\vspace*{3pt}
\begin{remark}
As investigated in the pioneer works  \cite{Hu15IFAC,Hu15Cyber,Hu15IJNRC,Hu17ACC,Hu19TCST,Hu20Tcyber} for multi-agent systems under attacks, Definitions \ref{Attack Frequency} and \ref{Attack Duration} are firstly introduced in \cite{Hu15IFAC} to specify attack signals in terms of frequency and time-ratio constraints. 
The $m$-th DoS attacks occurring at $a^{ij}_{m} $ between communication channels $ (i, j) $ with $ \tau^{ij}_{m} $ are allowed to occur aperiodically and can interrupt any communication channels synchronously or
asynchronously. Fig. \ref{Duration} provides an example for more details. In Definition \ref{Attack Frequency},  $1/T_{f}$ provides an upper bound on the average DoS frequency, while $ 1/T_{a} $ in Definition \ref{Attack Duration} provides an upper bound on the average DoS duration. It requires attacks to neither occur at an infinitely fast rate or be always activated.   
\end{remark}

\begin{figure}[!t]
	\centering
	\hspace{-1em}
	\includegraphics[width=8.0cm,height=2.1cm]{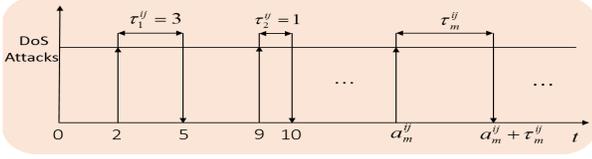}
	\caption{An illustration of DoS attacks on   communication channels $ (i,j) $ occurring at $ 2 $s, $ 9 $s, $a^{ij}_{m}$s with their durations being $ 3 $s, $ 1 $s, $ \tau^{ij}_{m} $s, respectively. Thus, we obtain $ \Xi^{(i,j)}_{a}(3,10)=[3,5)\cup [9,10)$ for just an example.} 
	\label{Duration}
\end{figure}

\vspace*{-5pt} 
\subsection{Main Objective}
\vspace*{-1pt} 
This work aims to study an attack-resilient issue of NE seeking under an insecure 
communication network as follows.

\begin{problem} \label{Problem}
(\textbf{Attack-Resilient Distributed NE Seeking}) \\
Consider a non-cooperative game consisting of $ N $ players communicating 
over an insecure communication network induced by DoS attacks. Design an attack-resilient distributed NE algorithm so that all players can exactly reach the NE $ x^{*} $ with an exponential convergence rate and a resilient feature against attacks. 
\vspace*{-3pt}
\begin{equation} \label{P} 
\hspace{1em}
\begin{split} 
& \text{minimize}   
\  J_{i}(x_{i}(t),x_{-i}(t)), \ x_{i}(t)\in \mathbb{R}^{n_{i}}, \ i=1,\cdots,N ,  \\
&\text{subject to:} \ 
\dot{x}_{i}(t)=u_{i}(t), \ t \in \Xi^{(i,j)}_{s}(t_{0},t) \cup \Xi^{(i,j)}_{a}(t_{0},t). 
\end{split}
\end{equation} 
\end{problem} 


\begin{remark}
In contrast to existing works in \cite{Ye17TAC,Ye19TAC,Pavel19TAC,Sun18,Liang17AT,Deng19TNNLS,Wang19ICCA,Xie13Conf,Pavel16TAC,Pavel19AT,Pang20TAC}, solving Problem 1 is much more challenging at least from the following aspects: \textit{(1) Player communication network:} the games involved in an insecure communication network may lead to the interruption of communication transmission caused by DoS attacks, which makes existing NE seeking algorithms  \cite{Ye17TAC,Ye19TAC,Pavel19TAC,Sun18,Liang17AT,Deng19TNNLS,Wang19ICCA,Xie13Conf,Pavel16TAC,Pavel19AT,Pang20TAC} inapplicable. \textit{(2) Assumption:} in the absence of attacks, the graph is directed rather than being an undirected graph or weight-balanced digraph in  \cite{Ye17TAC,Ye19TAC,Pavel19TAC,Sun18,Liang17AT,Deng19TNNLS,Wang19ICCA,Xie13Conf,Pavel16TAC,Pavel19AT,Pang20TAC}. 
Further, this digraph under DoS attacks can be disconnected or totally paralyzed, which brings  nontrivial convergence analysis. \textit{(3) Design requirement:} propose an attack-resilient distributed NE seeking scheme with the exponential convergence and resilience features against attacks. Due to  aforementioned challenges, existing NE seeking algorithms cannot be directly applied.
\end{remark}

\vspace*{-8pt}
\subsection{Motivating Example}
\vspace*{-2pt}
In order to illustrate the influence of DoS attacks on distributed NE seeking, the following motivation example is provided. We consider a classic example in economy, namely the Nash-Cournot game (e.g., \cite{98Agiza,Pavel19TAC}). This game includes some firms involved in the production of a homogeneous commodity, where the quantity produced by firm $ i \in \mathcal{V}$ is denoted by $ x_{i} $, and the overall cost function of each firm $ i $ is described by  
\vspace{-3pt}
\begin{equation}
J_{i}(x_{i},x_{-i})=g_{i}(x_{i})-x_{i}f(x), \ i \in \mathcal{V}, \label{MotivatedExample} \\
\end{equation} 
where $g_{i}(x_{i})=a_{i}+b_{i}(x_{i}-c_{i})+d_{i}x^{2}_{i}$ is the production cost with $ a_{i}, b_{i}, c_{i}, d_{i} $ describing the characteristics of firm $ i $, and $ f(x)=f_{0}-f_{1} \sum_{j=1}^{N}x_{j} $ is the commodity price with constants $ f_{0}, f_{1} $. As studied in \cite{Pavel19TAC}, these parameters are chosen as $ a_{i}=c_{i}=d_{i}=0 $, $ b_{i}=10+4(i-1) $, $ f_{0}=720 $, and $ f_{1}=1 $. By certain calculations, the NE is $ x^{*}=\text{col}(110,106,102,98,94,90)$. 

Next, we consider the following different cases:     
 
i) in the absence of attacks, the existing distributed NE seeking algorithm in \cite{Pavel19TAC} is performed, and Fig. \ref{MotivatingE1} shows that all players' strategies can reach consensus and converge to the NE.


\begin{figure}[!htb]
	\centering
	\begin{tabular}{c}
		\hspace*{-1.0em}		
        {\includegraphics[width=5.5cm,height=3.8cm]{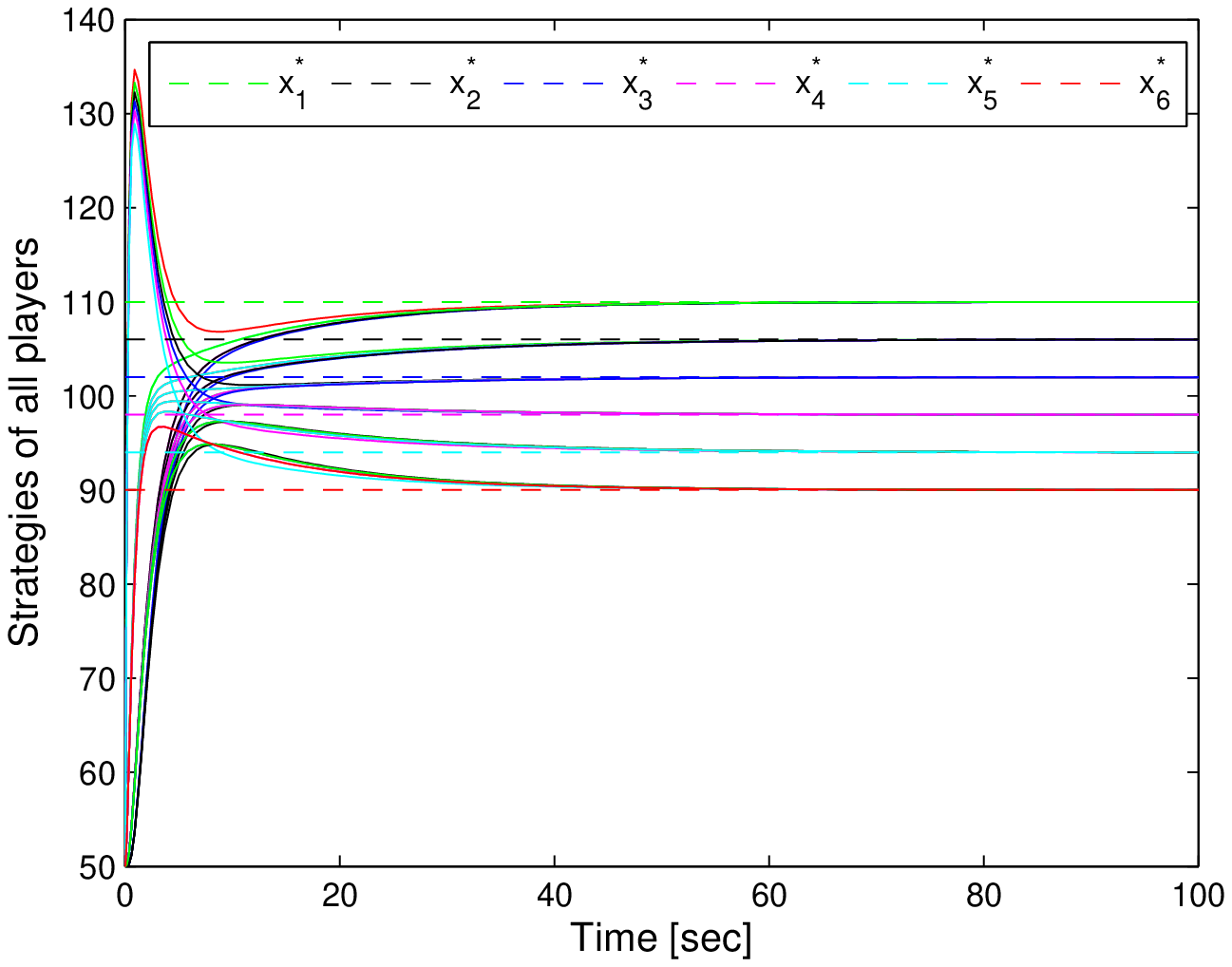} \label{m2}}
		
		\hspace*{-0.5em} 
		{\includegraphics[width=3.5cm,height=3.6cm]{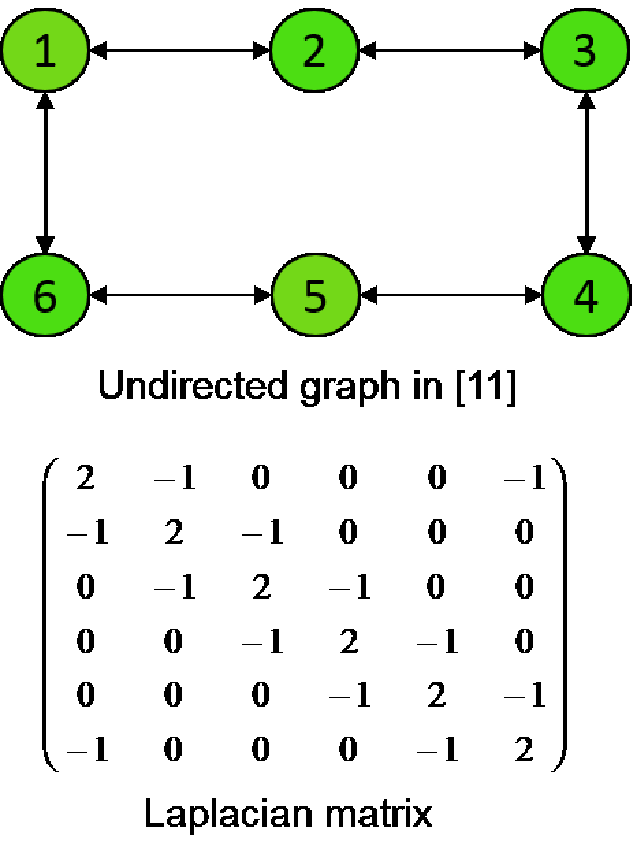} \label{m1} } 
	\end{tabular}	
	\caption{Simulated results generated by the existing algorithm \cite{Pavel19TAC} in the absence of DoS attacks. Left: plot of players' strategies and the NE $x^{*}_{i}$; Right: plot of an undirected graph  in \cite{Pavel19TAC} and its associated Laplacian matrix.}	
	\label{MotivatingE1} 
\end{figure}

ii) in the presence of attacks, this algorithm is further performed under graphs in Fig. \ref{MotivatingE2}(a) and the simulation result is shown in Fig. \ref{MotivatingE2}(b). As we see, all players' strategies neither reach consensus or converge to the NE. In contrast, the simulation result generated by the proposed attack-resilient algorithm is shown in \ref{MotivatingE2}(c), where all players' strategies reach consensus and converge to the NE.

\vspace*{-5pt}  
\begin{figure}[!htb]
	\centering
	\begin{tabular}{c}
		\hspace*{-1.2em}		
		\subfloat [Various graphs under DoS attacks on  communication channels]
		{\includegraphics[width=7.6cm,height=2.4cm]{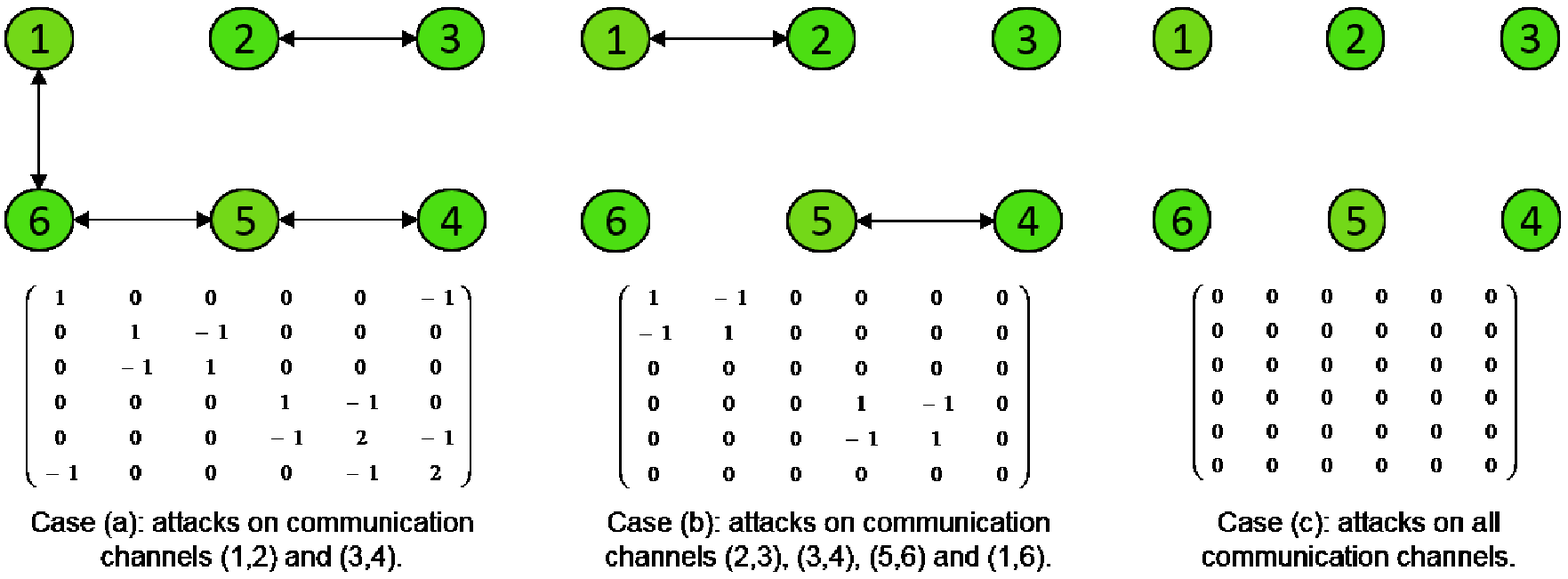} \label{me2}}
		
		\\
		
		\hspace*{-1.5em} 
		\subfloat [Algorithm in \cite{Pavel19TAC}]  
		{\includegraphics[width=5cm,height=3.2cm]{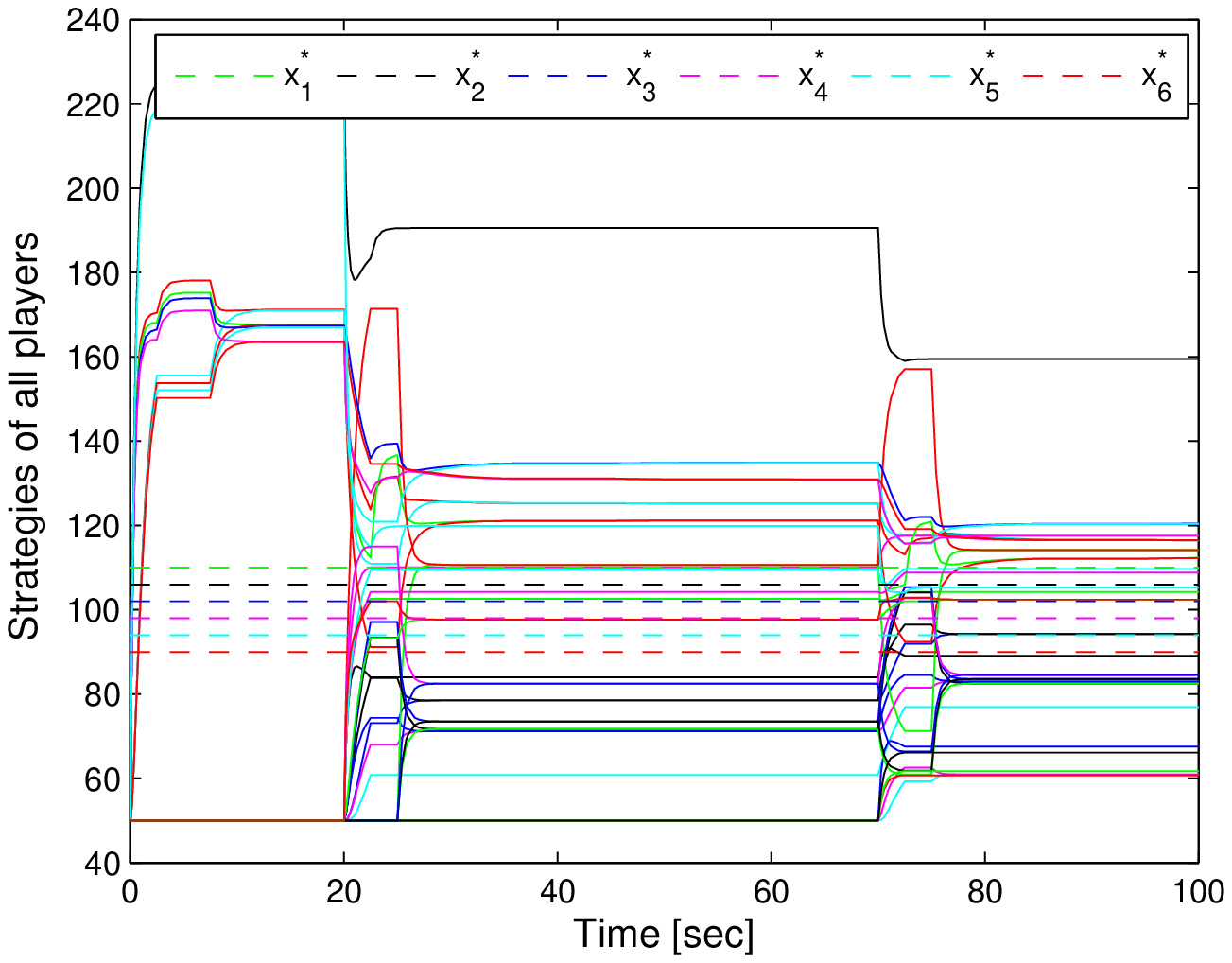} \label{me1} } 
		
		\hspace*{-1.5em} 
		\subfloat [Proposed attack-resilient algorithm]  
		{\includegraphics[width=5cm,height=3.2cm]{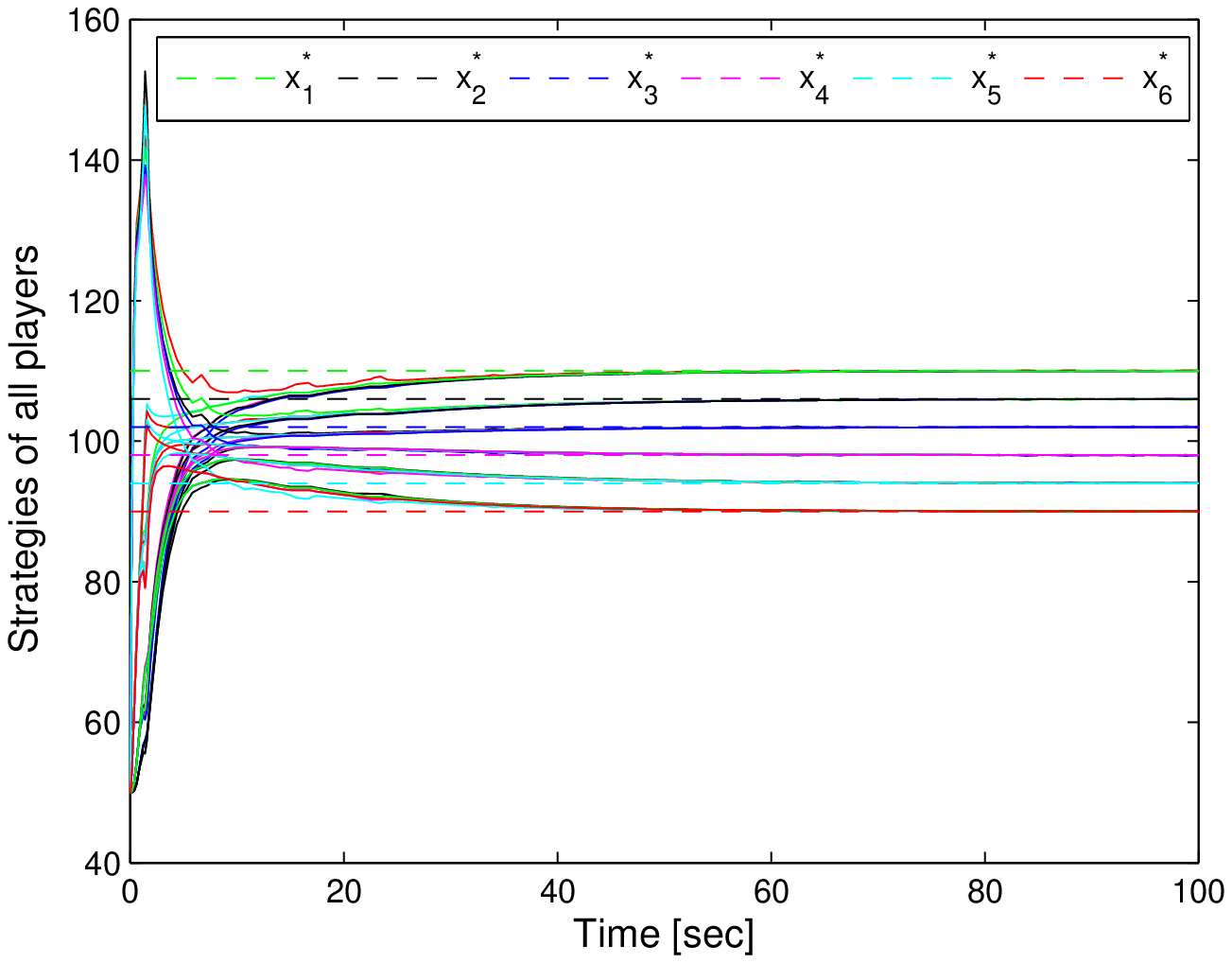} \label{me1} } 
		
	\end{tabular}	
	\caption{Simulated results generated by different algorithms under DoS attacks.} 
	\label{MotivatingE2} 
\end{figure}

\vspace*{-8pt}  
\section{Attack-Resilient Distributed NE Seeking} 
\vspace{-1pt}
In distributed NE seeking games, each player $ i $ has no access to the full information of all players' strategies. Then, each agent $ i $ shall estimate all other players' strategies. Inspired by \cite{Pavel19TAC}, let each player combine its gradient-play dynamics with an auxiliary dynamics, i.e., implement the following dynamics:  
\vspace{-3pt}
\begin{equation}
\left\{ 
\begin{array}{c}
\hspace{-0.1em}
\dot{x}^{i}_{i}=u^{i}_{i}, \ u^{i}_{i}=-\triangledown_{i} J_{i}(x^{i}_{i},\textbf{x}^{i}_{-i}) + e^{i}_{i}, \ i \in \mathcal{V},   \\ 
\hspace{-3.5em}
\dot{x}^{i}_{j} =u^{i}_{j},  \ u^{i}_{j}=e^{i}_{j}, \ \forall j \in \mathcal{V}, \ j\neq i,
\end{array}
\right. 
\label{ConsensusDesign}
\end{equation}
where player $ i $ maintains an estimate vector ${\textbf{x}}^{i}=\text{col}(x^{i}_{1}, \cdots, x^{i}_{i}, \\ \cdots, x^{i}_{N}) $ in which $ x^{i}_{j} $ is player $ i $'s estimate of player $ j $'s action, $ x^{i}_{i}=x_{i} $ is the player $ i $'s actual action, $ \textbf{x}^{i}_{-i} $ is the player $ i $' estimate vector without its own action, $ u^{i}_{i} =u_{i}$ is the player $ i $'s actual input, $ u^{i}_{j} $ is the other players' input, and $ e^{i}_{i}, e^{i}_{j} $ are to be developed. 
In (\ref{ConsensusDesign}), each player $ i $ updates $ x^{i}_{i} $ to reduce its own cost function and updates $ x^{i}_{j} $ to reach  consensus with the other players. In addition, each player $ i $ relies on its local estimated action $ \textbf{x}^{i}_{-i} $. 

For each player $ i $, (\ref{ConsensusDesign}) can be rewritten in a compact form   
\vspace{-3pt} 
\begin{equation} 	
\dot{\textbf{x}}^{i}=-\mathcal{R}^{T}_{i} \triangledown_{i} J_{i}(\textbf{x}^{i})  + \textbf{e}^{i}, \  i \in \mathcal{V},   
\label{NECompactDynamics} 
\end{equation} 
where $ \textbf{e}^{i}= \text{col}(e^{i}_{1},\cdots, e^{i}_{i}, \cdots,e^{i}_{N}) \in \mathbb{R}^{n}$ is a relative estimated error to be designed, and $ \mathcal{R}_{i} \in \mathbb{R}^{n_{i} \times n}$ used to align the gradient to the action component, is a  matrix given by 
\vspace{-5pt} 
\begin{equation} 	
\mathcal{R}_{i}=\left [0_{n_{i}\times n_{1}} \cdots 0_{n_{i}\times n_{i-1}} \ I_{n_{i}\times n_{i}} \ 0_{n_{i}\times n_{i+1}}  \cdots 0_{ n_{i}\times n_{N} } \right].  
\label{NERmatrix} 
\end{equation}

\vspace{-12pt} 
\subsection{Attack-Resilient Distributed NE Seeking Algorithm Design}
\vspace{-1pt}  
From the DoS attack model in the subsection III-B, we consider sequence-based attacks where the $ m $-th attack is lunched over a communication channel $ (i,j) $. Without loss of generality,  suppose that there exists an infinite sequence  
$k=0,1,2,\cdots$ for intervals $[t_{2k},t_{2(k+1)})$ such that in the absence of DoS attacks, each player $ i $ updates its input $ \textbf{e}^{i} $ based on an original communication network during $[t_{2k}, t_{2k+1}) $, while for each communication channel $ (i,j) $ under DoS attacks during $[t_{2k+1},t_{2k+2})$, the attackers interrupt its information transmission to make  original connected communication network disrupted or totally paralyzed. 
   
To analyze the influence of DoS attacks, we propose an \textit{attack-resilient distributed NE seeking algorithm} as  
\vspace{-2pt} 
\begin{equation} 	
\dot{\textbf{x}}^{i}=-\mathcal{R}^{T}_{i} \triangledown_{i} J_{i}(\textbf{x}^{i})  + \textbf{e}^{i},  
\ t \in [t_{2k},t_{2k+2}), \ k=0,1,\cdots,
\label{NEController1} 
\end{equation} 
where $ \textbf{e}^{i} $ denotes the relative estimated errors under attacks    
\vspace{-2pt}
\begin{equation}
\textbf{e}^{i}  = \left\{
\begin{array}{c} 
\hspace{-0.5em}
 -\kappa \sum^{N}_{j=1} a_{ij} (\textbf{x}^{i}-\textbf{x}^{j}),    \ t \in [t_{2k},t_{2k+1}), \\
\sum^{N}_{j=1}a^{\Psi(t)}_{ij} (\textbf{x}^{j}-\textbf{x}^{i}),   \ t \in  [t_{2k+1},t_{2k+2}), 
\end{array}%
\right. \label{NE2}
\end{equation}
where $\kappa$ is a positive constant gain to be specified later, and the 
$ a^{\Psi(t)}_{ij}$, $t \in  [t_{2k+1},t_{2k+2}) $,  dependent on an attack flag $ \psi(i,j,t) $, is defined as $ a^{\Psi(t)}_{ij}=0$ if $ \psi(i,j,t)=1 $ or $ -1 $; otherwise $ a^{\Psi(t)}_{ij}=1$ if $ \psi(i,j,t)=0$. The expression of $ \psi(i,j,t) $ is presented below. Set the initial value $ \psi(i,j,t)=0 $ for $ \forall (i,j)\in \bar{\mathcal{E}} $, and then each player $ i $ updates the attack flag $ \psi(j,i,t) $ as follows 

\begin{enumerate}
\item if player $ i $ can receive information from player $ j $ at $ t $,  then $ \psi(i,j,t) =0$ and it sends its information to player $ j $; 
	
\item if player $ i $ cannot receive information from player $ j $ at $ t $,  then $ \psi(i,j,t) =1$ and it sends $ \psi(i,j,t) =1$ to player $ j $; 
	
\item if player $ i $ receives $ \psi(i,j,t) =1$ which means it knows the attacking of channel $ (i,j) $, then denote $ \psi(i,j,t) =-1$ and send its information to player $ j $.  
\end{enumerate}     

\begin{remark}
To facilitate understandings of (\ref{NE2}), Fig. \ref{Time_sequence_relationship} shows the schematic of time sequences with and without DoS attacks. Intuitively, not all communication networks are secured anytime in practice, while it is reasonable to secure some original network. In the presence of DoS attacks, $ \textbf{e}^{i} $ in (\ref{NE2}) relies on $ \psi(i,j,t) $. Each player updates this attack flag once an attack signal over communication channels is detected by certain devices or mechanisms. The attack detection design is beyond the scope of this work. 
\end{remark}

\begin{remark}
In the absence of DoS attacks, $ \textbf{e}^{i} $ in (\ref{NE2}) becomes $\textbf{e}^{i} = -\kappa \sum^{N}_{j=1} a_{ij} (\textbf{x}^{i}-\textbf{x}^{j})$, which is a modified version of the design in \cite{Pavel19TAC} that requires a restrictive graph coupling condition under an undirected graph. In contrast, an adjustable proportional control gain $ \kappa $ is introduced to enable a natural trade-off between the  control effort and graph connectivity under Assumption \ref{AssumptionGraph}. 
\end{remark}

\begin{figure}[!h]
	\centering
	\includegraphics[width=8.0cm,height=1.6cm]{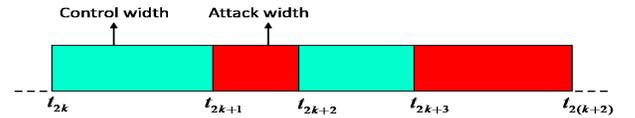}
	\caption{Schematic of time sequences with and without DoS attacks.}
	\label{Time_sequence_relationship}
\end{figure}

Next, denote the following stacked vectors and matrices 
\vspace{-2pt}
\begin{align}
\textbf{x} & =\text{col}(\textbf{x}^{1},\cdots,\textbf{x}^{N}), \  \mathcal{R}=\text{diag} \{ \mathcal{R}_{1}, \cdots,\mathcal{R}_{N} \},  \label{StackedVariable} \\
\textbf{e}&=\text{col}(\textbf{e}^{1},\cdots,\textbf{e}^{N}), \ \textbf{F(\textbf{x})} =\text{col}(\triangledown_{1} J_{1}(\textbf{x}^{1}),\cdots, \triangledown_{N} J_{N}(\textbf{x}^{N})). \notag 
\end{align}
   
Combining (\ref{NEController1})-(\ref{StackedVariable}) gives rise to the closed-loop system under attacks in the sense of a compact form  
\vspace{-3pt}
\begin{equation} \label{NEClosedCompact1}
\dot{\textbf{x}} = 
\left\{ 
\begin{array}{c} 
\hspace{-2em}
-\mathcal{R}^{T} \textbf{ F}(\textbf{x}) - \kappa (\mathcal{L}\otimes I_{n}) \textbf{x} ,  \  t \in \ [t_{2k},t_{2k+1}),  \\ 
\hspace{-0.4em}
-\mathcal{R}^{T} \textbf{ F}(\textbf{x})- (\mathcal{L}^{\Psi(t)} \otimes I_{n}) \textbf{x} ,   \   t \in \ [t_{2k+1},t_{2k+2}) ,
\end{array}
\right.	
\end{equation}
where $ \mathcal{L} $ is the Laplacian matrix of the original strongly connected digraph, while $ \mathcal{L}^{\Psi(t)}  $ is the Laplacian matrix of various potential graphs under attacks, and its zero eigenvalues may not be simple as those graphs can be unconnected under attacks.

\subsection{Stability Analysis with An Exponential Convergence Rate}
\vspace{-1pt} 
Before presenting the main result, we show that in the absence of attacks, the equilibrium of the system occurs when all players can reach consensus at the NE. 
 
\vspace{1pt}  
\begin{proposition} \label{proposition1}
Consider the game over the directed communication graph $ \bar{\mathcal{G}} $. Then, under Assumptions \ref{AssumptionGraph}-\ref{AssumptionSmooth}, $ \tilde{\textbf{x}}=1_{N}\otimes x^{*} $ is the NE of the game on networks in the absence of DoS attacks if $ \textbf{F}(\tilde{\textbf{x}}) =0_{n}  $ (or $ \triangledown_{i} J_{i}(\tilde{\textbf{x}}^{i}) =0_{n_{i}} $). At the NE,  estimated vectors of all players reach consensus and equal to the NE $ x^{*} $.
Thus, players' action components coincide with optimal actions ($ \tilde{\textbf{x}}^{i}_{i}= x^{*}_{i}$). 
\end{proposition}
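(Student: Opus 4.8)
The plan is to show that $\tilde{\textbf{x}}=1_{N}\otimes x^{*}$ is the \emph{unique} equilibrium of the attack-free branch of (\ref{NEClosedCompact1}), i.e.\ of $\dot{\textbf{x}}=-\mathcal{R}^{T}\textbf{F}(\textbf{x})-\kappa(\mathcal{L}\otimes I_{n})\textbf{x}$; the consensus claim and $\tilde{\textbf{x}}^{i}_{i}=x^{*}_{i}$ then follow at once. For sufficiency I would simply verify that $\tilde{\textbf{x}}$ annihilates the right-hand side: since $\bar{\mathcal{L}}1_{N}=0$ and $\mathcal{L}=\mathcal{W}\bar{\mathcal{L}}$ we get $(\mathcal{L}\otimes I_{n})\tilde{\textbf{x}}=(\mathcal{L}1_{N})\otimes x^{*}=0$, and since every block of $\tilde{\textbf{x}}$ equals $x^{*}$ with $\triangledown_{i}J_{i}(x^{*})=0_{n_{i}}$ (the first-order NE condition granted by Assumption \ref{AssumptionConvex}), we have $\textbf{F}(\tilde{\textbf{x}})=0_{n}$ and hence $\mathcal{R}^{T}\textbf{F}(\tilde{\textbf{x}})=0$; thus $\dot{\textbf{x}}=0$ at $\tilde{\textbf{x}}$.

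For necessity, let $\textbf{x}^{\star}=\text{col}(\textbf{x}^{\star,1},\dots,\textbf{x}^{\star,N})$ be any equilibrium, so $\mathcal{R}^{T}\textbf{F}(\textbf{x}^{\star})+\kappa(\mathcal{L}\otimes I_{n})\textbf{x}^{\star}=0$. The key step is to left-multiply by $1_{N}^{T}\otimes I_{n}$ to eliminate the coupling term: taking $\omega=\text{col}(\omega_{1},\dots,\omega_{N})$ to be the left eigenvector of $\bar{\mathcal{L}}$ from Lemma \ref{lemma1}(2) (so $\mathcal{W}=\text{diag}\{\omega_{1},\dots,\omega_{N}\}$), we have $1_{N}^{T}\mathcal{L}=1_{N}^{T}\mathcal{W}\bar{\mathcal{L}}=\omega^{T}\bar{\mathcal{L}}=0_{N}^{T}$, which leaves $\sum_{i=1}^{N}\mathcal{R}_{i}^{T}\triangledown_{i}J_{i}(\textbf{x}^{\star,i})=0$. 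Because the $\mathcal{R}_{i}^{T}$ embed $\triangledown_{i}J_{i}$ into pairwise disjoint action slots (cf.\ (\ref{NERmatrix})), this forces $\triangledown_{i}J_{i}(\textbf{x}^{\star,i})=0_{n_{i}}$ for every $i$, hence $\mathcal{R}^{T}\textbf{F}(\textbf{x}^{\star})=0$ as well. Substituting back, $(\mathcal{L}\otimes I_{n})\textbf{x}^{\star}=0$; as $\mathcal{W}$ is a positive diagonal this is equivalent to $(\bar{\mathcal{L}}\otimes I_{n})\textbf{x}^{\star}=0$, and by Lemma \ref{lemma1}(1) the zero eigenvalue of $\bar{\mathcal{L}}$ is simple with eigenvector $1_{N}$, so $\textbf{x}^{\star}=1_{N}\otimes\bar{x}$ for some $\bar{x}\in\mathbb{R}^{n}$ — that is, all players' estimate vectors reach consensus. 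Plugging $\textbf{x}^{\star,i}=\bar{x}$ into $\triangledown_{i}J_{i}(\textbf{x}^{\star,i})=0_{n_{i}}$ gives $F(\bar{x})=0_{n}$, and since $F$ is $\varepsilon$-strongly monotone (Assumption \ref{AssumptionSmooth}) the equation $F(x)=0_{n}$ has the single root $x^{*}$; thus $\bar{x}=x^{*}$, $\textbf{x}^{\star}=\tilde{\textbf{x}}$, and in particular the action component satisfies $\tilde{\textbf{x}}^{i}_{i}=x^{*}_{i}$.

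The only delicate point I anticipate is the directed-graph structure: the cancellation of the consensus term relies on $1_{N}^{T}\mathcal{L}=0_{N}^{T}$, which is precisely the reason the weighting $\mathcal{W}=\text{diag}(\omega)$ built from the left eigenvector of $\bar{\mathcal{L}}$ is inserted in (\ref{NE2}); with that identity and the simplicity of the zero eigenvalue of $\bar{\mathcal{L}}$ in hand, the rest is routine bookkeeping with the Kronecker products and the uniqueness of the NE under strong monotonicity.
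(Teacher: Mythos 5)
Your proposal is correct and follows essentially the same route as the paper's proof: left-multiplying the equilibrium equation by $1_{N}^{T}\otimes I_{n}$, using $1_{N}^{T}\mathcal{L}=\omega^{T}\bar{\mathcal{L}}=0_{N}^{T}$ to kill the consensus term, extracting $\triangledown_{i}J_{i}=0_{n_i}$ from the block structure of $\mathcal{R}$, invoking the simple zero eigenvalue of $\bar{\mathcal{L}}$ to get consensus, and concluding $\bar{x}=x^{*}$ by uniqueness of the NE. The only difference is presentational: you make explicit the left-eigenvector weighting behind $1_{N}^{T}\mathcal{L}=0_{N}^{T}$ and add the (easy) sufficiency check, both of which the paper leaves implicit.
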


\begin{proof}
In the absence of DoS attacks, let $ \tilde{\textbf{x}} $ be an equilibrium of the system. Then, we have $ 0_{Nn}=-\mathcal{R}^{T} \textbf{ F}(\tilde{\textbf{x}}) - \kappa (\mathcal{L}\otimes I_{n}) \tilde{\textbf{x}} , $
which implies that multiplying both sides by $ 1^{T}_{N} \otimes I_{n}$ yields  
\vspace{-2pt}
\begin{equation}
0_{n}=- (1^{T}_{N} \otimes I_{n}) \mathcal{R}^{T} \textbf{ F}(\tilde{\textbf{x}}) - \kappa (1^{T}_{N} \otimes I_{n}) (\mathcal{L}\otimes I_{n}) \tilde{\textbf{x}}.   \label{NE4}
\end{equation}  

Since $ 1^{T}_{N} \mathcal{L} =0^{T}_{Nn}$ under Assumption \ref{AssumptionGraph}, we obtain $ 0_{n}=(1^{T}_{N} \otimes I_{n}) \mathcal{R}^{T}  \textbf{F}(\tilde{\textbf{x}})  $. Then, it follows from the notations of $ \mathcal{R} $ and $ \textbf{F} $ in (\ref{StackedVariable}) that $ \textbf{F}(\tilde{\textbf{x}}) =0_{n}  $. Then, submitting it into (\ref{NE4}) gives rise to $ (\mathcal{L}\otimes I_{n}) \tilde{\textbf{x}}=0_{Nn} $. Hence, there exists certain $ \theta \in \mathbb{R}^{n} $ such that $ \tilde{\textbf{x}}=1_{N}\otimes \theta  $ under Assumption \ref{AssumptionGraph}. Then, it has $ \textbf{F}(1_{N}\otimes \theta) =0_{n}  $ for each player $ i \in \mathcal{V} $. Thus, $ \triangledown_{i} J_{i}(\theta_{i},\theta_{-i}) =0_{n_{i}}$. That is, $ \theta  $ is a unique NE of the game and $ \theta=x^{*} $. Thus, $ \tilde{\textbf{x}}=1_{N}\otimes x^{*} $ and for $i, j \in \mathcal{V}$,  we have $ \tilde{\textbf{x}}^{i}=\tilde{\textbf{x}}^{j}= x^{*} $ (NE of the game).
\end{proof}	


Notice that in the presence of DoS attacks, the system becomes $ \dot{\textbf{x}}= -\mathcal{R}^{T} \textbf{ F}(\textbf{x})- (\mathcal{L}^{\Psi(t)} \otimes I_{n}) \textbf{x}$. Due to the fact that the existence and uniqueness of the NE $ x^{*} $ are guaranteed under Assumptions \ref{AssumptionConvex} and \ref{AssumptionSmooth}, then, following a similar analysis above can give rise to $ 0_{n}= (1^{T}_{N} \otimes I_{n}) \mathcal{R}^{T} \textbf{ F}(\tilde{\textbf{x}})  -(1^{T}_{N} \otimes I_{n}) (\mathcal{L}^{\Psi(t)} \otimes I_{n}) \tilde{\textbf{x}}$. 
However, the presence of DoS attacks will make $ 1^{T}_{N} \mathcal{L}^{\Psi(t)} =0^{T}_{Nn}$ not hold and then, consensus estimates on the NE cannot be reached under attacks because there may not have correct information exchange among all players. Under such a situation, $ x^{*} $ may not be the NE. Next, the main task is an explicit analysis of the frequency and duration of  attacks to guarantee $ \tilde{\textbf{x}}^{i}=\tilde{\textbf{x}}^{j}= x^{*} $. 

Next, we present the main result on the resilient distributed NE seeking on networks under DoS attacks.   
 
\vspace{2pt} 
\begin{theorem} \label{Theorem2}
Under Assumptions \ref{AssumptionGraph}-\ref{AssumptionSmooth}, Problem \ref{Problem} can be solvable for any $ \textbf{x}_{i}(0)$  under the proposed resilient distributed optimization algorithm in (\ref{NEController1})-(\ref{NE2}) provided that for $ \kappa > \frac{1}{\lambda_{2}(\hat{\mathcal{L}})} (\frac{\iota^{2}}{\varepsilon} +\iota) $ and positive scalars $\lambda_{a}, \lambda_{b} , u $ to be determined later, the following two attack-related conditions are satisfied: 

(1). There exists constants $\eta^{\ast }\in (0,\lambda_{a})$ and $ \mu>1 $ so that $T_{f}$ in the \textit{attack frequency} Definition \ref{Attack Frequency} satisfies the condition:
\vspace*{-2pt}
\begin{equation}
T_{f} > T^{*}_{f}= 2 \ln(\mu)/\eta^{\ast},
\label{Condition1}
\end{equation}

(2). There exist constants $ \lambda_{a}$, $\lambda_{b}>0 $ such that $T_{a}$ in the \textit{attack duration} Definition \ref{Attack Duration} that satisfies the condition: 
\vspace*{-2pt}
\begin{equation}
\hspace{2em}
T_{a}>T^{*} _{a}=(\lambda_{a}+\lambda_{b})/(\lambda_{a}-\eta^{\ast }). 
\label{Condition2}
\end{equation}%

Moreover, the estimated states can converge to the NE with an exponential convergence rate, i.e.,
\begin{equation}
|| \textbf{x} (t) - \tilde{\textbf{x}}  ||^{2} \leq \varsigma e^{-\eta (t-t_{0})} ||\textbf{x} (t_{0}) - \tilde{\textbf{x}} ||^{2},  \ \forall t_{0} \geq 0,
\label{ExponentialConvergence}
\end{equation}%
where $ \varsigma $ is a positive scalar and 
 $ \eta=\lambda_{a}-(\lambda_{a}+\lambda_{b})/T_{a} -\eta^{*}>0$. 
\end{theorem}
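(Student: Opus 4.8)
The plan is to build a common Lyapunov-type functional that decays on the attack-free intervals $[t_{2k},t_{2k+1})$ and grows at a controlled rate on the attack intervals $[t_{2k+1},t_{2k+2})$, then glue these pieces together using the attack-frequency and attack-duration bounds of Definitions~\ref{Attack Frequency} and \ref{Attack Duration}. First I would change coordinates to the error variable $\boldsymbol{\delta} \triangleq \textbf{x} - \tilde{\textbf{x}}$, where $\tilde{\textbf{x}} = 1_{N}\otimes x^{*}$, using Proposition~\ref{proposition1} to note that $\tilde{\textbf{x}}$ satisfies $\mathcal{R}^{T}\textbf{F}(\tilde{\textbf{x}}) = 0$ and $(\mathcal{L}\otimes I_{n})\tilde{\textbf{x}} = 0$. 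I would then split $\boldsymbol{\delta}$ into a ``consensus component'' and a ``disagreement component'' relative to the left eigenvector $\omega$ of $\mathcal{L}$ from Lemma~\ref{lemma1}, since the contraction along the disagreement directions is exactly what the spectral bound $\min_{\zeta^{T}x=0}\frac{x^{T}\hat{\mathcal{L}}x}{x^{T}x} > \lambda_{2}(\hat{\mathcal{L}})/N$ controls. The candidate functional is $V(\textbf{x}) = \tfrac12 \|\boldsymbol{\delta}\|^{2}$ (possibly weighted by $\mathcal{W}$ or by a suitable positive-definite matrix to symmetrize the directed Laplacian action).

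On an attack-free interval, differentiating $V$ along (\ref{NEClosedCompact1}) gives $\dot V = -\boldsymbol{\delta}^{T}\mathcal{R}^{T}(\textbf{F}(\textbf{x}) - \textbf{F}(\tilde{\textbf{x}})) - \kappa\,\boldsymbol{\delta}^{T}(\mathcal{L}\otimes I_{n})\boldsymbol{\delta}$. The first term is bounded below using $\iota_{F}$-Lipschitz continuity and $\varepsilon$-strong monotonicity of $F$ (Assumption~\ref{AssumptionSmooth}) — the standard trick is to add and subtract $\textbf{F}$ evaluated at the ``lifted'' action profile so that strong monotonicity yields a $-\varepsilon$ term on the action components while the Lipschitz constant controls the cross terms between action and estimation errors; a Young's-inequality split of the cross term produces the $\iota^{2}/\varepsilon + \iota$ combination. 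The second term is bounded above by $-\kappa\lambda_{2}(\hat{\mathcal{L}})$ times the disagreement energy via Lemma~\ref{lemma1}. Choosing $\kappa > \frac{1}{\lambda_{2}(\hat{\mathcal{L}})}(\frac{\iota^{2}}{\varepsilon}+\iota)$ makes the net bound $\dot V \leq -\lambda_{a} V$ for some $\lambda_{a}>0$. On an attack interval the Laplacian $\mathcal{L}^{\Psi(t)}$ need not be positive semidefinite in the relevant coordinates and $1_{N}^{T}\mathcal{L}^{\Psi(t)} \neq 0$, so the best one can say is $\dot V \leq \lambda_{b} V$ for some $\lambda_{b}>0$ obtained from $\|\mathcal{L}^{\Psi(t)}\|$, $\iota_{F}$ and the magnitude of the flag-induced perturbation; here I would use the switched-system estimate that at each switching instant $V$ jumps by at most a factor $\mu>1$ (from the potential discontinuity in the graph topology when $\psi$ updates).

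Then I would iterate over $[t_{0},t]$: on the $N_{a}(t_{0},t)$ switches we pick up $\mu^{N_{a}(t_{0},t)}$, over a total sleeping time $|\Xi_{s}|$ we gain a factor $e^{-\lambda_{a}|\Xi_{s}(t_{0},t)|}$, and over a total active time $|\Xi_{a}|$ we lose $e^{\lambda_{b}|\Xi_{a}(t_{0},t)|}$. Using $|\Xi_{a}(t_{0},t)| \leq T_{0} + (t-t_{0})/T_{a}$ and $|\Xi_{s}(t_{0},t)| = (t-t_{0}) - |\Xi_{a}(t_{0},t)| \geq (t-t_{0})(1 - 1/T_{a}) - T_{0}$, together with $N_{a}(t_{0},t)\leq N_{0} + (t-t_{0})/T_{f}$ and $T_{f} > 2\ln(\mu)/\eta^{*}$, the exponent collects into $-\big(\lambda_{a} - (\lambda_{a}+\lambda_{b})/T_{a} - \eta^{*}\big)(t-t_{0})$ plus a bounded constant absorbed into $\varsigma$; condition (\ref{Condition2}) is exactly what makes $\eta = \lambda_{a} - (\lambda_{a}+\lambda_{b})/T_{a} - \eta^{*} > 0$, giving (\ref{ExponentialConvergence}). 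Since $V \to 0$ exponentially forces $\boldsymbol{\delta}\to 0$, i.e.\ $\textbf{x}^{i}\to x^{*}$ for all $i$, and in particular the action components $x^{i}_{i}\to x^{*}_{i}$, Problem~\ref{Problem} is solved.

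The main obstacle I anticipate is the attack-interval estimate: because the perturbed Laplacian $\mathcal{L}^{\Psi(t)}$ is generally neither symmetric nor positive semidefinite, and the flag mechanism (cases 1--3 defining $\psi(i,j,t)$) can make $\mathcal{L}^{\Psi(t)}$ genuinely destabilizing in the consensus direction, one must carefully verify that $V$ only grows at a \emph{finite} exponential rate $\lambda_{b}$ — i.e.\ the perturbation is bounded, not merely sign-indefinite — and that the jump constant $\mu$ is uniform over all topologies the attack can induce. Getting $\lambda_{a},\lambda_{b},u,\mu$ to be mutually consistent (so that $\eta^{*}\in(0,\lambda_{a})$ is achievable simultaneously with (\ref{Condition1})--(\ref{Condition2})) is the delicate bookkeeping step.
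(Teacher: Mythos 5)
Your proposal follows essentially the same route as the paper's proof: a quadratic error functional in $\textbf{x}-\tilde{\textbf{x}}$ split into consensus and disagreement parts, strong monotonicity plus Lipschitz bounds (with a Young's-inequality split producing the $\iota^{2}/\varepsilon+\iota$ threshold on $\kappa$) giving $\dot V\leq-\lambda_{a}V$ on attack-free intervals, a bounded growth rate $\dot V\leq\lambda_{b}V$ under attacks, and the switched-system iteration combined with the frequency and duration bounds to obtain $\eta=\lambda_{a}-(\lambda_{a}+\lambda_{b})/T_{a}-\eta^{*}$. The only cosmetic differences are that the paper projects onto $1_{N}$ (after re-weighting the Laplacian by $\mathcal{W}$) rather than onto $\omega$, and its factor $\mu$ arises from switching between two differently scaled Lyapunov functions $V_{a},V_{b}$ (the state itself is continuous) rather than from a topology-induced jump, but the bookkeeping is identical.
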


\vspace{4pt}
\begin{proof}
The proof includes four steps: \\
\textbf{Step i):} when communication networks do not suffer from DoS attacks during $[t_{2k}, t_{2k+1}) $, we first show that the NE seeking can be achieved exponentially under a strongly connected digraph. 

Now, we first make a coordinate transformation as 
\vspace{-3pt}
\begin{align}
\overrightarrow{\textbf{x}} &= (1_{N} \otimes  \mathcal{S} )\textbf{x} \in \mathbb{R}^{Nn} , \ \mathcal{S} =  \frac{1}{N} (1^{T}_{N} \otimes I_{n}), \label{Xright} \\
\overleftarrow{\textbf{x}} &= (\mathcal{T} \otimes I_{n}) \textbf{x} \in \mathbb{R}^{Nn}, \ \mathcal{T}=I_{N}-\frac{1}{N} (1_{N }1^{T}_{N} ). \label{Xleft}
\end{align}
 
\vspace*{-3pt}
Then, it follows from (\ref{Xright}) that the average estimate of $ \textbf{x}^{i} $ can be described by $ \bar{\textbf{x}}=\frac{1}{N}\sum^{N}_{i=1}\textbf{x}^{i}= \frac{1}{N} (1^{T}_{N} \otimes I_{n})\textbf{x}=\mathcal{S}\textbf{x} $. Further, for any $ \textbf{x} \in \mathbb{R}^{Nn}$, it can be decomposed as  $\textbf{x}=\overrightarrow{\textbf{x}}+ \overleftarrow{\textbf{x}} $ with $ (\overrightarrow{\textbf{x}})^{T}\overleftarrow{\textbf{x}}=0 $ and $ (\mathcal{L} \otimes I_{n}) \overrightarrow{\textbf{x}}= 0_{Nn} $ under Assumption \ref{AssumptionGraph}. 

\vspace{3pt}
For stability analysis, we select the Lyapunov function as
\vspace*{-1pt}
\begin{align}
V_{a}(\textbf{x})&= \frac{\alpha}{2} (\textbf{x}-\tilde{\textbf{x}})^{T}(\textbf{x}-\tilde{\textbf{x}})= \frac{\alpha}{2} (\overrightarrow{\textbf{x}}+\overleftarrow{\textbf{x}} -\tilde{\textbf{x}})^{T}(\overrightarrow{\textbf{x}}+\overleftarrow{\textbf{x}} -\tilde{\textbf{x}}) \notag \\
&=\frac{1}{2} 
\left[
\begin{array}{l}  
  \overrightarrow{\textbf{x}}-\tilde{\textbf{x}}    \\ 
 \ \ \   \overleftarrow{\textbf{x}}  
\end{array}
\right]^{T}  P_{a} \left[
\begin{array}{l}  
 \overrightarrow{\textbf{x}}-\tilde{\textbf{x}}   \\ 
 \ \  \  \overleftarrow{\textbf{x}}  
\end{array}
\right],
\label{LyapunovFunction1}
\end{align}%
where $ P_{a} =\text{diag}\{\alpha I_{Nn}, \alpha I_{Nn}\}$, $ \alpha>0 $ is an adjustable constant, and the fact that $ (\overrightarrow{\textbf{x}})^{T}\overleftarrow{\textbf{x}}=0 $ and $ 1^{T}_{N} \mathcal{T}=0^{T}_{N}$ are used.

The time derivative of $ V_{a}(\textbf{x}) $ is expressed as
\vspace*{-1pt}
\begin{equation}
\dot{V}_{a}(\textbf{x})= - \alpha  (\textbf{x}-\tilde{\textbf{x}})^{T} [\mathcal{R}^{T}  \textbf{F} (\textbf{x}) + \kappa (\mathcal{L}\otimes I_{n})\textbf{x} ]. \label{L1}
\end{equation} 

As $ 0_{Nn}=-\mathcal{R}^{T} \textbf{F}(\tilde{\textbf{x}}) - \kappa (\mathcal{L}\otimes I_{n}) \tilde{\textbf{x}} $, (\ref{L1}) can be expressed as 
\vspace*{-1pt}
\begin{equation}
\dot{V}_{a}(\textbf{x})= - \alpha   (\textbf{x}-\tilde{\textbf{x}})^{T} [\mathcal{R}^{T}  (\textbf{F} (\textbf{x})-\textbf{F}(\tilde{\textbf{x}})) + \kappa (\mathcal{L}\otimes I_{n}) (\textbf{x}-\tilde{\textbf{x}}) ]. \label{L2}
\end{equation} 

In light of $ \tilde{\textbf{x}}=1_{N}\otimes x^{*} $, $ \textbf{x}=\overrightarrow{\textbf{x}}+ \overleftarrow{\textbf{x}} $, and $ (\mathcal{L} \otimes I_{n}) \overrightarrow{\textbf{x}}= 0_{Nn} $ under Assumption \ref{AssumptionGraph}, then the first term in 
(\ref{L2}) becomes
\vspace*{-3pt}
\begin{align}
&-   (\textbf{x}-\tilde{\textbf{x}})^{T} \mathcal{R}^{T} [  \textbf{F} (\textbf{x})-\textbf{F}(\tilde{\textbf{x}})]   \label{L3} \\
&= -(\overleftarrow{\textbf{x}})^{T} \mathcal{R}^{T}  [ \textbf{F} (\textbf{x})-\textbf{F}(\overrightarrow{\textbf{x}}) ]  - (\overleftarrow{\textbf{x}})^{T} \mathcal{R}^{T}[  \textbf{F}(\overrightarrow{\textbf{x}})-\textbf{F}(\tilde{\textbf{x}})]  \notag \\
&  \ \ \ - ( \overrightarrow{\textbf{x}} - \tilde{\textbf{x}})^{T} \mathcal{R}^{T} [ \textbf{F} (\textbf{x})-\textbf{F}(\overrightarrow{\textbf{x}})]   -  ( \overrightarrow{\textbf{x}} - \tilde{\textbf{x}})^{T} \mathcal{R}^{T} [  \textbf{F}(\overrightarrow{\textbf{x}})-\textbf{F}(\tilde{\textbf{x}})]. \notag 
\end{align}

It follows from Assumption \ref{AssumptionSmooth} that according to the $ \iota_{F} $-Lipschitz continuity of $ F $, it yields that $ \|F(\textbf{x})-F(\overrightarrow{\textbf{x}})  \| \leq \iota_{F} \| \overleftarrow{\textbf{x}} \| $. Further, $ \| \textbf{F} (\textbf{x})-\textbf{F}(\overrightarrow{\textbf{x}})  \| \leq \iota_{\textbf{F}} \| \overleftarrow{\textbf{x}} \| $ for certain scalar $ \iota_{\textbf{F}}>0 $. In addition, since $ \| \mathcal{R}^{T} \|=1 $, $ \textbf{F}(\overrightarrow{\textbf{x}})=F(\bar{\textbf{x}})$,  and $  \textbf{F}(\tilde{\textbf{x}})=F(x^{*})=0 $, 
\vspace*{-2pt}
\begin{align}
 - (\overleftarrow{\textbf{x}})^{T} \mathcal{R}^{T}[  \textbf{F}(\overrightarrow{\textbf{x}})-\textbf{F}(\tilde{\textbf{x}})] & = - (\overleftarrow{\textbf{x}})^{T} \mathcal{R}^{T} (F(\bar{\textbf{x}}) - F(x^{*})) \notag \\
 & \leq \iota_{F} \| \overleftarrow{\textbf{x}} \| \| \bar{\textbf{x}}- x^{*}\|, \label{L4}
\end{align}
\vspace*{-20pt}
\begin{align}
- ( \overrightarrow{\textbf{x}} - \tilde{\textbf{x}})^{T} \mathcal{R}^{T} [ \textbf{F} (\textbf{x})-\textbf{F}(\overrightarrow{\textbf{x}})] &= - (\bar{\textbf{x}}-x^{*})^{T} (\textbf{F} (\textbf{x})-\textbf{F}(\overrightarrow{\textbf{x}})) \notag \\
& \leq \iota_{\textbf{F}} \|\bar{\textbf{x}}-x^{*} \| \| \overleftarrow{\textbf{x}}\|, \label{L5}
\end{align}
where the fact that $ \mathcal{R} \overrightarrow{\textbf{x}}=\bar{\textbf{x}} $ and $ \mathcal{R} \tilde{\textbf{x}}=x^{*} $ is used, and exploiting the $ \varepsilon $-strong monotonicity of $ F $, we can obtain 
\vspace*{-2pt}
\begin{align}
& -  ( \overrightarrow{\textbf{x}} - \tilde{\textbf{x}})^{T} \mathcal{R}^{T} [  \textbf{F}(\overrightarrow{\textbf{x}})-\textbf{F}(\tilde{\textbf{x}})] \notag \\
&  =- (\bar{\textbf{x}}-x^{*})^{T}(F(\bar{\textbf{x}})-F(x^{*})) \leq -\varepsilon \| \bar{\textbf{x}}-x^{*} \|^{2}. \label{L6}
\end{align} 

In addition, the second term in (\ref{L2}) can be rewritten as
\vspace*{-3pt}
\begin{align}
&-(\textbf{x}-\tilde{\textbf{x}})^{T}(\mathcal{L} \otimes I_{n}) (\textbf{x}-\tilde{\textbf{x}})  = -(\overrightarrow{\textbf{x}}+\overleftarrow{\textbf{x}} )^{T}(\mathcal{L}\otimes I_{n}) (\overrightarrow{\textbf{x}}+\overleftarrow{\textbf{x}})     \notag \\
& = -\frac{1}{2} \overleftarrow{\textbf{x}}^{T} [(\mathcal{L}^{T}+\mathcal{L}) \otimes I_{n}] \overleftarrow{\textbf{x}}  \leq - \lambda_{2} (\hat{\mathcal{L}}) \| \overleftarrow{\textbf{x}} \|^{2},
 \label{L7}
 \end{align} 
where $ \lambda_{2} (\hat{\mathcal{L}}) $ is a minimal positive eigenvalue of  $ \hat{\mathcal{L}}=\frac{1}{2}( \mathcal{L}^{T}+\mathcal{L})$.  

\vspace*{3pt}
Let $ \iota=\max\{\iota_{F},\iota_{\textbf{F}}\} $. Substituting (\ref{L4})-(\ref{L7}) into (\ref{L2}) gives 
\vspace*{-2pt}
\begin{align}
\hspace{-0.3em}
\dot{V}_{a}(\textbf{x}) & \leq - \alpha  \lbrace (\kappa \lambda_{2} ( \hat{\mathcal{L}}) - \iota ) \| \overleftarrow{\textbf{x}} \|^{2}  + \varepsilon \|\bar{\textbf{x}} -x^{*} \|^{2} - 2 \iota \| \overleftarrow{\textbf{x}} \| \| \bar{\textbf{x}} -x^{*} \|  \rbrace  \notag \\ 
& = - \alpha \left[
\begin{array}{l}  
\|\bar{\textbf{x}}-x^{*}\|  \\ 
 \ \ \|\overleftarrow{\textbf{x}} \|
\end{array}
\right]^{T}      
\left[
\begin{array}{cc} 
\varepsilon & - \iota \\
-\iota  & \kappa \lambda_{2} ( \hat{\mathcal{L}}) - \iota 
\end{array}
\right]
 \left[
\begin{array}{l}  
\| \bar{\textbf{x}}-x^{*} \|  \\ 
 \ \  \|\overleftarrow{\textbf{x}} \|
\end{array}
\right], \notag \\
& = - \frac{1}{2} \left[
\begin{array}{l}  
\|\overrightarrow{\textbf{x}}-\tilde{\textbf{x}} \| \\ 
 \ \  \|\overleftarrow{\textbf{x}} \|
\end{array}
\right]^{T}     
Q_{a}
\left[
\begin{array}{l}  
\|\overrightarrow{\textbf{x}}-\tilde{\textbf{x}} \| \\ 
 \ \  \|\overleftarrow{\textbf{x}} \|
\end{array}
\right],
 \label{L8}
\end{align} 
where the fact that $ \|\bar{\textbf{x}}-x^{*}\|=\frac{1}{\sqrt{N}} \|\overrightarrow{\textbf{x}}-\tilde{\textbf{x}} \| $ is used and $ Q_{a}= 2\alpha  \\ \left[
\begin{array}{cc} 
\frac{\varepsilon}{N} & - \frac{\iota}{\sqrt{N}} \\
-\frac{\iota}{\sqrt{N}} & \kappa \lambda_{2} ( \hat{\mathcal{L}}) - \iota 
\end{array}
\right] >0 $ if $ \kappa > \frac{1}{\lambda_{2} (\hat{\mathcal{L}}) } (\frac{\iota^{2}}{\varepsilon} +\iota) $. 

\vspace{2pt}
Thus, it concludes that there exists positive scalars $\alpha, \epsilon, \iota $, and a matrix $ P_{a} $, so that for the Lyapunov function in (\ref{LyapunovFunction1}), 
\vspace{-5pt}
\begin{equation}
\hspace{-0.2em}
V_{a}(\textbf{x})=\frac{1}{2} \chi^{T} P_{a}  \chi    \Rightarrow   \dot{V}_{a} (\textbf{x}) \leq - \lambda_{a} V_{a}(\textbf{x}),  \   t \in [t_{2k}, t_{2k+1}), \label{L9}
\end{equation}   
where $ \chi= \text{col} ( \overrightarrow{\textbf{x}}-\tilde{\textbf{x}} , \overleftarrow{\textbf{x}} ) $ and $ \lambda_{a}=\lambda_{min}(Q_{a})/\lambda_{max}(P_{a})>0 $.

\vspace*{2pt}
\textbf{Step ii):} when considering the presence of DoS attacks during $[t_{2k+1}, t_{2(k+1)}) $, we select the Lyapunov function  as
\vspace*{-5pt}
\begin{align}
V_{b}(\textbf{x})&= \frac{\beta}{2} (\textbf{x}-\tilde{\textbf{x}})^{T}(\textbf{x}-\tilde{\textbf{x}})  
=\frac{1}{2}  \chi^{T}
P_{b} 
 \chi,
\label{LyapunovFunction2}
\end{align}%
where $ P_{b} =\text{diag}\{\beta I_{Nn}, \beta I_{Nn}\} $, $ \beta \neq \alpha $ is a positive constant.

\vspace{2pt}
In the presence of DoS attacks, $ \dot{\textbf{x}}= -\mathcal{R}^{T} \textbf{ F}(\textbf{x})-(\mathcal{L}^{\Psi(t)} \otimes I_{n})\textbf{x}$ by (\ref{NEClosedCompact1}). Since $ x^{*} $ is the unique NE, $ 0_{Nn}= -\mathcal{R}^{T} \textbf{ F}(\tilde{\textbf{x}})-(\mathcal{L}^{\Psi(t)} \otimes I_{n})\tilde{\textbf{x}}$. 
By adding this equation, it follows from (\ref{L3})-(\ref{L8}) that the time derivative of $ V_{b}(\textbf{x}) $ can be described by   
\vspace{-3pt}
\begin{align}
\dot{V}_{b}(\textbf{x}) &= - \beta   (\textbf{x}-\tilde{\textbf{x}})^{T} [\mathcal{R}^{T}  (\textbf{F} (\textbf{x})-\textbf{F}(\tilde{\textbf{x}})) +(\mathcal{L}^{\Psi(t)} \otimes I_{n}) (\textbf{x}-\tilde{\textbf{x}}) ]   \notag \\
& \leq - \beta \left[
\begin{array}{l}  
\|\bar{\textbf{x}}-x^{*}\|  \\ 
\ \ \|\overleftarrow{\textbf{x}} \|
\end{array}
\right]^{T}      
\left[
\begin{array}{cc} 
\varepsilon & - \iota \\
-\iota  &   - \iota 
\end{array}
\right]
\left[
\begin{array}{l}  
\| \bar{\textbf{x}}-x^{*} \|  \\ 
\ \  \|\overleftarrow{\textbf{x}} \|
\end{array}
\right] \notag \\
& \ \ \ - \beta   ( \overrightarrow{\textbf{x}} -\tilde{\textbf{x}} +\overleftarrow{\textbf{x}} )^{T} (\mathcal{L}^{\Psi(t)} \otimes I_{n}) (\overrightarrow{\textbf{x}} -\tilde{\textbf{x}} +\overleftarrow{\textbf{x}} ) \notag \\
& \leq \beta  \left[
\begin{array}{l}  
\|\overrightarrow{\textbf{x}}-\tilde{\textbf{x}} \| \\ 
\ \  \|\overleftarrow{\textbf{x}} \|
\end{array}
\right]^{T}     
 \left[
\begin{array}{cc} 
-\frac{\varepsilon}{N} &   \frac{\iota}{\sqrt{N}} \\
\frac{\iota}{\sqrt{N}} & \iota 
\end{array}
\right] 
\left[
\begin{array}{l}  
\|\overrightarrow{\textbf{x}}-\tilde{\textbf{x}} \| \\ 
\ \  \|\overleftarrow{\textbf{x}} \|
\end{array}
\right] \notag \\
& \ \ \ + \beta \|\mathcal{L}^{\Psi(t)} \|  \left[
\begin{array}{l}  
\|\overrightarrow{\textbf{x}}-\tilde{\textbf{x}} \| \\ 
\ \  \|\overleftarrow{\textbf{x}} \|
\end{array}
\right]^{T}     
\left[
\begin{array}{l}  
\|\overrightarrow{\textbf{x}}-\tilde{\textbf{x}} \| \\ 
\ \  \|\overleftarrow{\textbf{x}} \|
\end{array}
\right] \notag \\
& = \frac{1}{2} \left[
\begin{array}{l}  
\|\overrightarrow{\textbf{x}}-\tilde{\textbf{x}} \| \\   
\ \  \|\overleftarrow{\textbf{x}} \|
\end{array}
\right]^{T} Q_{b}     
\left[
\begin{array}{l}  
\|\overrightarrow{\textbf{x}}-\tilde{\textbf{x}} \| \\ 
\ \  \|\overleftarrow{\textbf{x}} \|
\end{array}
\right]
\label{LL1}
\end{align}
where $ Q_{b}= 2\beta  \left[
\begin{array}{cc} 
c-\frac{\varepsilon}{N} &   c+\frac{\iota}{\sqrt{N}} \\
c+ \frac{\iota}{\sqrt{N}} & c+\iota 
\end{array}
\right] $ with $ c= \|\mathcal{L}^{\Psi(t)} \| $, may have both positive and negative eigenvalues. 

Thus, it concludes that there exists positive scalars $\beta, \epsilon, \iota $, and a matrix $ P_{b} $, so that for the Lyapunov function in (\ref{LyapunovFunction2}), 
\vspace{-5pt}
\begin{equation}
\hspace{-0.3em}
V_{b}(\textbf{x})=\frac{1}{2} \chi^{T} P_{b}  \chi    \Rightarrow  \dot{V}_{b} (\textbf{x}) \leq \lambda_{b} V_{b}(\textbf{x}),  \ t \in [t_{2k+1},t_{2k+2}), \label{LL3}
\end{equation}   
where  $ \lambda_{a}=\sigma _{\max}(Q_{b})/\lambda_{min}(P_{b})>0 $ and $ \sigma _{\max}(Q_{b}) $ denotes the maximum singular value of $ Q_{b} $. 

\vspace{2pt}
\textbf{Step iii):} we analyze the exponential convergence of the closed-loop system from a switching perspective \cite{Hu15IFAC,Hu15IJNRC,Hu17ACC,Hu15Cyber,Hu19TCST}. 

Let $ \delta(t) \in \{a,b\} $ be a switching signal. Then, we can choose 
\vspace{-5pt}
\begin{equation}
V(t)=\left\{ 
\begin{array}{c} 
\hspace{-0.8em} 
V_{a}(\textbf{x}) ,  \  \text{if} \ t \in  [t_{2k},t_{2k+1}) , \\ 
 V_{b}(\textbf{x}) ,   \text{ if} \ t \in   [t_{2k+1},t_{2k+2}),
\end{array}%
\right. 
\label{VV5}
\end{equation}
where $ V_{a}(\textbf{x}) $ and $ V_{b}(\textbf{x}) $ are defined in (\ref{LyapunovFunction1}) and (\ref{LyapunovFunction2}), respectively.
 
Suppose that $ V_{a} $ is activated in $ [t_{2k},t_{2k+1}) $, while $ V_{b} $ is activated in $ [t_{2k+1},t_{2k+2}) $. Then, by (\ref{L9}) and (\ref{LL3}), we have 
\vspace{-5pt}
\begin{equation}
V(t) \leq \left\{ 
\begin{array}{c}
\hspace{-0.5em}  
e^{-\lambda_{a}(t-t_{2k}) } V_{a}(t_{2k}) , \  \text{if} \ t \in \ [t_{2k},t_{2k+1}) , \\ 
\hspace{-0.5em}  
e^{\lambda_{b}(t-t_{2k+1}) } V_{b}(t_{2k+1}) , \  \text{ if} \ t \in \ [t_{2k+1},t_{2k+2}).
\end{array}%
\right. 
\label{VV6}
\end{equation}

The closed-loop system is switched at $ t= t^{+}_{2k} $ or $ t= t^{+}_{2k+1} $. Let $ \mu=\max\{\lambda_{max}(P_{a})/\lambda_{min}(P_{b}),\lambda_{max}(P_{b})/\lambda_{min}(P_{a})\} > 1$, and next, we discuss the following two cases: 

\vspace*{1pt}
Case a): If $ t \in   [t_{2k},t_{2k+1}) $, it follows from (\ref{VV6}) that 
\begin{align}
V(t) & \leq e^{-\lambda_{a} (t-t_{2k})}V_{a}(t_{2k})\leq \mu e^{-\lambda_{a}
	(t-t_{2k})}V_{b}(t_{2k}^{-})  \notag \\
&\leq \mu e^{-\lambda_{a} (t-t_{2k})}[e^{\lambda_{b}
	(t_{2k}-t_{2k-1})}V_{b}(t_{2k-1})]  \notag \\
&\leq \mu e^{-\lambda_{a} (t-t_{2k})}e^{\lambda_{b} (t_{2k}-t_{2k-1})} [\mu
V_{a}(t_{2k-1}^{-})]  \notag \\
&=\mu ^{2}e^{-\lambda_{a} (t-t_{2k})}e^{\lambda_{b}
	(t_{2k}-t_{2k-1})}V_{a}(t_{2k-1}^{-}) \notag \\
&\leq \mu ^{2}e^{-\lambda_{a} (t-t_{2k})}e^{\lambda_{b} (t_{2k}-t_{2k-1})}  \lbrack e^{-\lambda_{a} (t_{2k-1}-t_{2k-2})}V_{a}(t_{2k-2})] 
\notag \\
&\leq \cdots  \notag \\
&\leq \mu^{2k} e^{-\lambda_{a} | \Xi_{s} (t_{0}, t)| }e^{\lambda_{b}
| \Xi_{a} (t_{0}, t)| }V_{a}(t_{0}).  \label{VV7}
\end{align}

Case b): If $ t \in   [t_{2k+1},t_{2(k+1)}) $, it follows from (\ref{VV6}) that 
\begin{align}
V(t) &\leq e^{\lambda_{b} (t-t_{2k+1})} V_{b}(t_{2k+1}) \leq \mu e^{\lambda_{b} (t-t_{2k+1})}V_{a}(t_{2k+1}^{-}) \notag \\
&\leq \mu e^{\lambda_{b} (t-t_{2k+1})}[e^{-\lambda_{a}(t_{2k+1}-t_{2k})} V_{a}(t_{2k}) \notag \\ 
& \leq \mu e^{\lambda_{b}(t-t_{2k+1})} e^{-\lambda_{a}(t_{2k+1}-t_{2k})} [  \mu V_{b}(t^{-}_{2k-1})  ] \notag \\ 
& \leq \mu^{2} e^{\lambda_{b}(t-t_{2k+1})} e^{-\lambda_{a}(t_{2k+1}-t_{2k}) }  [ e^{\lambda_{b}(t_{2k-1}-t_{2k-2})} V_{b}(t_{2k-2})  ]   \notag \\ 
&\leq \cdots  \notag \\
&\leq \mu ^{2k+1} e^{-\lambda_{a} | \Xi_{s} (t_{0}, t)| }e^{\lambda_{b} | \Xi_{a} (t_{0}, t)| } V_{a}(t_{0}).  \label{VV8}
\end{align} 
 
\textbf{Step iv):} we consider bounds on attack frequency and duration.

\vspace*{2pt}
Notice that $N_{a}(t_{0},t)= k$ for $t \in [t_{2k},t_{2k+1})$ and $k+1$ for  $t \in [t_{2k+1},t_{2(k+1)})$. Thus, for $\forall t\geq t_{0},$ by (\ref{VV7}) and (\ref{VV8}),    
\vspace*{-1pt}
\begin{equation}
V(t)\leq \mu^{2N_{a}(t_{0},t)}e^{-\lambda_{a} |\Xi_{s}(t_{0},t)|}e^{\lambda_{b}|\Xi_{a}(t_{0},t)|}V(t_{0}).
\label{VV9}
\end{equation}

Notice that for all $t\geq t_{0},$ $|\Xi_{s}(t_{0},t)|=t-t_{0}-|\Xi_{a}(t_{0},t)|$ and $|\Xi_{a}(t_{0},t)|\leq T_{0}+(t-t_{0})/T_{a} $ by Definition \ref{Attack Duration}. Thus, we have 
\vspace*{-3pt}
\begin{align}
&-\lambda_{a}(t-t_{0}-|\Xi_{a}(t_{0},t)|)+\lambda_{b}|\Xi_{a}(t_{0},t)|  \notag \\
&= -\lambda_{a}(t-t_{0})+(\lambda_{a}+\lambda_{b})  |\Xi_{a}(t_{0},t)|  \notag \\
&\leq -\lambda_{a}(t-t_{0})+(\lambda_{a}+\lambda_{b}) [T_{0}+(t-t_{0})/T_{a}].  \label{VV10}
\end{align}

Next, substituting (\ref{VV10}) into (\ref{VV9}) yields
\vspace*{-2pt}
\begin{align}
V(t) &\leq \mu ^{2N_{a}(t_{0},t)}e^{-\lambda_{a}(t-t_{0}-| \Xi_{a}(t_{0},t)|)}e^{\lambda_{b}|\Xi_{a}(t_{0},t)|}V(t_{0})  \label{VV11} \\
&\leq e^{(\lambda_{a}+\lambda_{b})T_{0}}e^{-\lambda_{a}(t-t_{0})}e^{\frac{(\lambda_{a}+\lambda_{b})}{\tau_{a}}(t-t_{0})} e^{2\ln (\mu ) N_{a}(t_{0},t)}V(t_{0}).  \notag
\end{align}

By exploiting the attack condition in (\ref{Condition1}), we can have 
\vspace*{-3pt}
\begin{equation}
2\ln(\mu) N_{a}(t_{0},t) \leq  2\ln(\mu) N_{0}+ \eta^{*}(t-t_{0}) .   
\label{VV111}
\end{equation} 

Let $\eta=\lambda_{a}-(\lambda _{a}+\lambda _{b})/T_{a}-\eta^{\ast}>0$. 
Based on another attack condition in (\ref{Condition2}), and using (\ref{VV111}), we can rewrite (\ref{VV11}) as
\vspace*{-2pt}
\begin{equation}
V(t)\leq e^{( \lambda_{a}+ \lambda_{b}) T_{0} +2 \ln(\mu) N_{0}} \ e^{-\eta
(t-t_{0})}V(t_{0}).  
\label{VV12}
\end{equation} 

Further, it follows from (\ref{LyapunovFunction1}), (\ref{LyapunovFunction2}),  and (\ref{VV12}) that 
\vspace*{-3pt}
\begin{equation}
|| \chi (t)||^{2} \leq \varsigma e^{-\eta
	(t-t_{0})} ||\chi(t_{0})||^{2},   
\label{VV13}
\end{equation} 
where $ \varsigma= e^{(\lambda_{a}+ \lambda_{b})T_{0} +2 \ln(\mu) } \varsigma_{a}/\varsigma_{b}$, $ \varsigma_{a}= \max\{  \lambda_{max}(P_{a}),\lambda_{max} ( \\ P_{b})\}$, and $ \varsigma_{b}=\min\{\lambda_{min}(P_{a}),\lambda_{min}(P_{b})\}$.

\vspace{3pt}
Therefore, it follows from (\ref{VV13}) that all estimate states $  \overrightarrow{\textbf{x}}-\tilde{\textbf{x}} $ and $\overleftarrow{\textbf{x}}$  are bounded, and converge to zero exponentially. Furthermore,  $ \lim_{t\rightarrow \infty} (\overrightarrow{\textbf{x}}-\tilde{\textbf{x}})=0_{Nn}$ and $ \lim_{t\rightarrow \infty} \overleftarrow{\textbf{x}} = 0_{Nn}$. Then, according to the coordinate transformation $ \overrightarrow{\textbf{x}} $, $ \overleftarrow{\textbf{x}} $ in (\ref{Xright}) and (\ref{Xleft}), and by using the fact that $ \textbf{x}= \overrightarrow{\textbf{x}}+ \overleftarrow{\textbf{x}}$, we can obtain that $\textbf{x} $ exponentially converges to $ \tilde{\textbf{x}} =1 \otimes x^{*}$. That is, Problem \ref{Problem} is solved. 
\end{proof}

\begin{remark}
Theorem 1 presented the main resilient distributed NE seeking result with an exponential convergence rate $\eta=\lambda_{a}-(\lambda _{a}+\lambda _{b})/T_{a}-\eta^{\ast}$. Here, $(1-\frac{1}{T_{a}}) \lambda _{a}$ is mainly used to measure the average rate of exponential decay of the stable subsystems, while $\lambda _{b}/T_{a} $ isy used to measure the exponential growth rate of unstable subsystems. The $ \eta^{*} $ explains the exponential growth due to switchings. Note that the convergence rate is not only affected by $ \lambda_{a} $ and $ \lambda_{b} $ that rely on the communication topology, number of players, and control gains, but the attack frequency and duration. Moreover, the larger values of attack frequency and duration are, the more active that those attacks are allowed to be.
\end{remark}

Notice that in the absence of DoS attacks, Theorem 1 can be reduced to the following corollary: 

\textbf{Corollary 1:} Under Assumptions 1-3, the following distributed NE seeking algorithm enables all players' estimated strategies to exponentially converge to the NE provided $ \kappa >  (\frac{\iota^{2}}{\varepsilon} +\iota)/\lambda_{2}(\hat{\mathcal{L}}) $.
\vspace{-6pt}
\begin{equation}
\dot{\textbf{x}}^{i}=-\mathcal{R}^{T}_{i} \triangledown_{i} J_{i}(\textbf{x}^{i})  + \textbf{e}^{i}, \  \textbf{e}^{i}=-\kappa\sum_{j=1}^{N}a_{ij} (\textbf{e}^{i}-\textbf{e}^{j})
, \ i\in \mathcal{V}.
\notag 
\end{equation}

\vspace{-5pt}
\begin{remark}
The corollary can cover some existing results (e.g., \cite{Ye17TAC,Pavel19TAC}) as special cases. Moreover, it can avoid restrictive graph coupling conditions in \cite{Pavel19TAC} by adding a proportional gain $ \kappa $, and remove the use of two-timescale singular perturbation that yields semi-global convergence in \cite{Ye17TAC,Pavel19TAC}. The design does not require any initial requirements and allow for a general directed graph.     
\end{remark}

\vspace{-5pt}  
\section{Numerical Simulations}
\vspace{-2pt}
In this section, numerical examples are presented to verify the effectiveness of the proposed NE seeking design. The communication graph for all players in examples is depicted in Fig. \ref{StronglyDigraphAttacks}, in which the original strongly connected digraph and the paralyzed graphs induced by attacks are presented, respectively.  

\begin{figure}[!h]
\centering
\includegraphics[width=6.0cm,height=3.6cm]{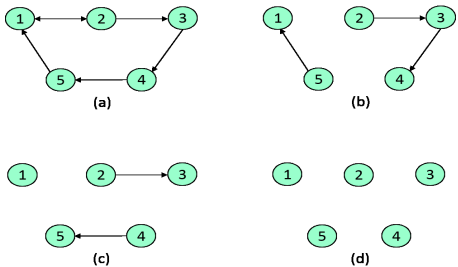}
\caption{Communication graph for the players in the examples: (a) original strongly connected digraph; and (b)-(d) paralyzed graphs under various DoS attacks.} 
\label{StronglyDigraphAttacks}
\end{figure}

\begin{example}
(Energy Consumption Game)
\end{example} 

In this example, we consider an energy consumption game of $ N $ players for Heating Ventilation and Air Conditioning (HVAC) system (see \cite{Ye17TAC}), where the cost function of each player $ i $ can be modeled by the following function:
\vspace{-5pt}
\begin{equation}
J_{i}(x_{i},x_{-i})=a_{i}(x_{i}-b_{i})^{2}+ \left(  c\sum_{j=1}^{N}x_{j} +d  \right)  x_{i}, \ i \in \mathcal{V}, \notag  
\end{equation}
where $ a_{i}>0, c>0 $, $ b_{i} $ and $ d $ are constants for $ i \in \mathcal{V} $. It can be verified that Assumptions \ref{AssumptionConvex} and \ref{AssumptionSmooth} are satisfied. Throughout this simulation, let $ a_{i}=1$, $ c=0.1 $, $ d=10 $ for each player. In the following simulation, we investigate the effectiveness of the proposed resilient distributed NE seeking algorithm in (\ref{NEController1})-(\ref{NE2}) from the perspective of network under DoS attacks with several comparisons with the existing algorithm, controller gain, attack frequency/duration, network topology and number of players. 

\vspace{-6pt}
\subsection{Resilient Algorithm for Exponential Distributed NE Seeking}
\vspace{-1pt}
We consider five players ($ N=5 $) in the game over a strongly connected digraph in the absence of attacks and three types of disconnected digraphs caused by DoS attacks as shown in Fig. \ref{StronglyDigraphAttacks}. Constants $ b_{i} $ for $ i =1,\cdots,5 $, are set to $ 10$, $15$, $20$, $25 $, and $30$, respectively. By certain calculation based on those parameters, the NE is $ x^{*} =\text{col} (2.0147,6.7766,11.5385,  16. 3004,21.0623)$  \cite{Ye17TAC}. 
The initial states are given by $ x^{i}_{i}(0)=\text{col}(-2,-4,-6, -8, -10)$ and $ x^{i}_{j}(0)=\text{col}(15,10,5,0)$, $\forall i \neq j $, which are not close to $ x^{*} $. The control gain of  algorithm in (\ref{NE2}) is set as $ \kappa=10 $ and the variable to balance the weight is $ \omega=\text{col}(\frac{1}{6},\frac{2}{6},\frac{1}{6},\frac{1}{6},\frac{1}{6}) $. 

Next, we perform the proposed resilient NE seeking algorithm in (\ref{NEController1})-(\ref{NE2}), and  simulation results are provided as shown in Fig.  \ref{Combination1}. In particular, Fig. \ref{Combination1}(a) shows the occurrence of DoS attacks, where attack frequency and duration conditions in Theorem 1 are satisfied. Fig. \ref{Combination1}(b) illustrates all players' estimate strategies on the NE $x^{*}$, while the relative errors of all players' actions $ \|\textbf{x}-\textbf{x}^{*} \| / \\ \|\textbf{x}^{*}\| $ are depicted in Fig. \ref{Combination1}(c). As observed, all players' estimate strategies reach consensus and converge to the NE exponentially. 

\begin{figure}[!t]
	\centering
	\hspace{-0.5em}
	\includegraphics[width=9.0cm,height=8.2cm]{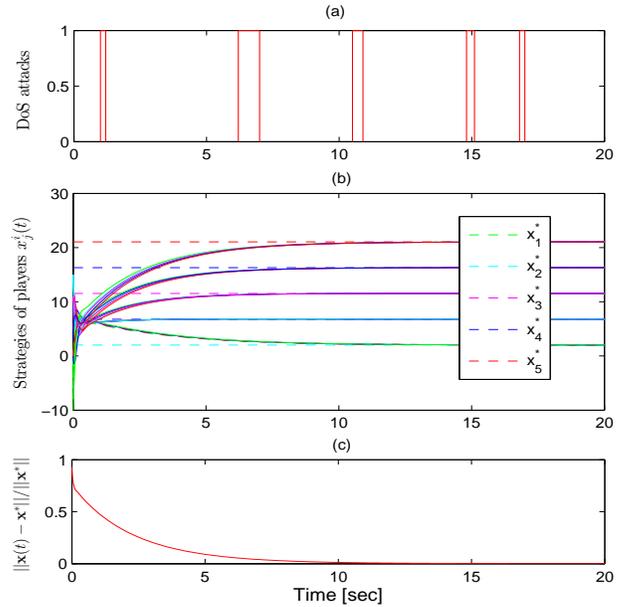}
	\caption{Simulated results of the proposed attack-resilient NE seeking algorithm in (\ref{NEController1})-(\ref{NE2}): (a) DoS attacks on various communication channels; (b) all players' estimated strategies $x^{i}_{j}(t), i, j\in \mathcal{V}$; and (c) relative errors of all players' actions.} 
	\label{Combination1}
\end{figure}

\vspace{-6pt}
\subsection{Algorithm Comparison}
\vspace{-1pt}
In order to make some comparisons, we perform the algorithm in \cite{Pavel19TAC} to further illustrate the proposed algorithm's effectiveness under DoS attacks. All simulation environments are set the same as those in the subsection V-A.  
It can be observed from Fig. \ref{ComparisionAlgorithm} (a) that in the presence of attacks, the design in \cite{Pavel19TAC}  cannot guarantee the exact convergence of all players' estimates to the NE $ x^{*} $. In contrast, Fig. \ref{ComparisionAlgorithm} (b) shows the performance of the proposed attack-resilient algorithm, which can verify the design's effectiveness. 

\begin{figure}[!t]
	\centering
	\begin{tabular}{c}
		%
		\hspace*{-1.0em}		
		\subfloat [Algorithm in \cite{Pavel19TAC} ] {\includegraphics[width=5.0cm,height=2.9cm]{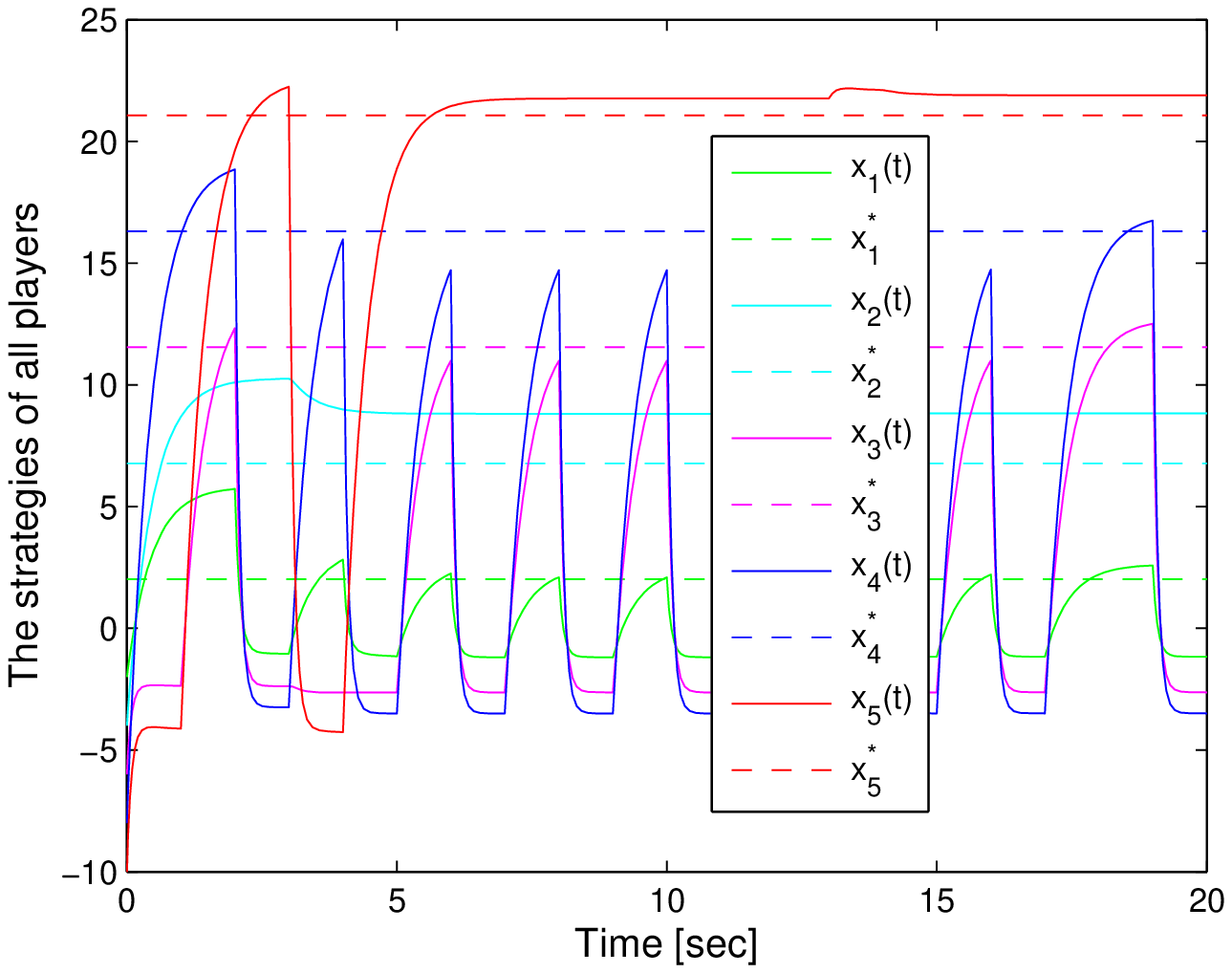} \label{A17}}
		
		\hspace*{-1.0em}
		\subfloat [Algorithm in (\ref{NEController1})-(\ref{NE2})]
		{\includegraphics[width=5.0cm,height=2.9cm]{StronglyDigraph_Attacks} \label{A} } 
	\end{tabular}	
	\caption{The plot of players' strategies $x_{i}(t)$ produced by the proposed algorithm and the algorithm in \cite{Pavel16TAC} in the presence of DoS attacks.
	}	
	\label{ComparisionAlgorithm} 
\end{figure}

\subsection{Performance Analysis of Algorithm}

\textit{\textbf{1) Controller gain:}} 
we show the influence of control gain $ \kappa $ on the performance of algorithm, we conduct the proposed algorithm with the same simulation setting in the subsection V-A, but under two different gains $ \kappa=1 $ and $ \kappa=5 $, respectively.  Fig. \ref{ControlGains} shows the plots of players' strategies and relative errors under different control gains, which implies that the larger the gain $\kappa$, the better the convergence performance is, which is as analyzed.

\textit{\textbf{2) Number of Player:}}
we increase the number of players to $ N=3$, $5$, $10 $ and analyze its influence on the performance of the proposed algorithm. We set $ b_{i}=5i+5 $ for each $ i=1,2,\cdots,N $, and select a cycle directed graph as the original communication graph. Fig. \ref{ComparisionDifferentplayers} depicts the performance of the proposed algorithm under the different number of players. The algorithm is scalable to various number of players, and the smaller the number of players, the better the performance is as expected.

\textit{\textbf{3) Network Topology:}} 
in this part, we investigate the influence of the  different original communication topologies on the performance of algorithm. As analyzed in the theorem, the larger the $ \lambda_{a} $, the better convergence performance is. Intuitively, the larger the nonzero eigenvalue of the Laplacian matrix, the larger the $ \lambda_{a} $. Fig. \ref{ComparisionDifferentgraphs} shows the logarithmic curve of relative errors. As we see, the performance is better for a graph with more links.

%


%

\textit{\textbf{4) Attack Frequency and Duration:}}
we illustrate the influence of attack frequency and duration on the performance of algorithm. According to the theorem, the frequency and duration of attacks have to be constrained to guarantee the convergence of all players' estimates to the NE. Simulated result is shown in Fig. \ref{ComparisionAttacks}, where a comparison, when the attack frequency and duration conditions in (\ref{Condition1}) and (\ref{Condition2}) hold and do not not, is provided. As can be seen, the result is consistent with the analysis in Remark 6.



\begin{figure}[!t]
	\centering
	\begin{tabular}{c}
		\hspace*{-1.2em}		
		\subfloat [$ \kappa=1 $]
		{\includegraphics[width=5.0cm,height=4.9cm]{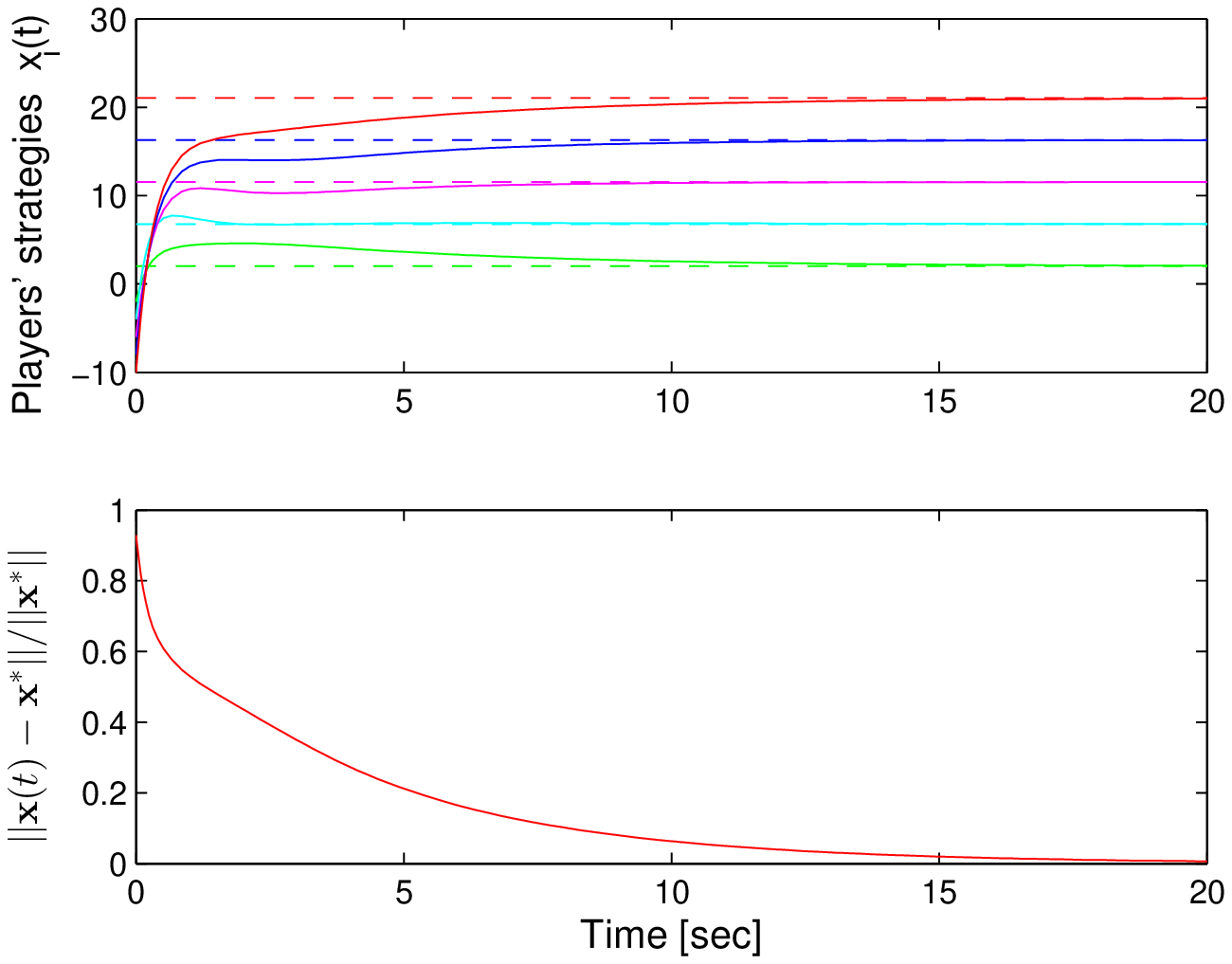} \label{k10} } 
		
		\hspace*{-1.2em} 
		\subfloat [$ \kappa=5 $] {\includegraphics[width=5.0cm,height=4.9cm]{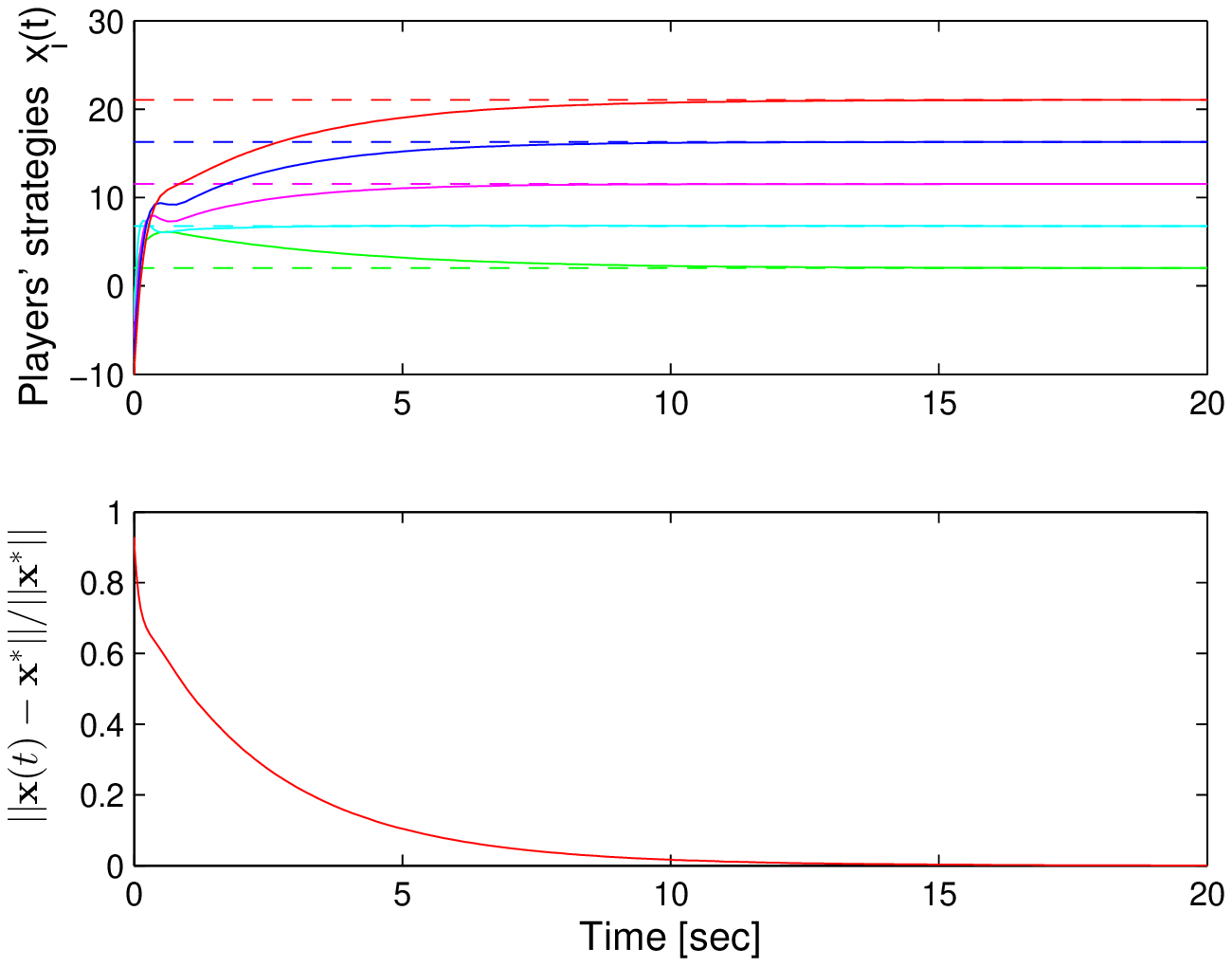} \label{k1}}
	\end{tabular}	
	\caption{The plot of players' strategies $x_{i}$ and their relative errors $ \|\textbf{x}-\textbf{x}^{*}\|/\|\textbf{x}^{*}\| $ produced by the proposed  algorithm in (\ref{NEController1})-(\ref{NE2}) with different controller gains. }	
	\label{ControlGains} 
\end{figure}

\begin{figure}[!t]
	\centering
	\begin{tabular}{c}
		\hspace*{-1.2em} 		
		\subfloat []
		{\includegraphics[width=6.0cm,height=1.0cm]{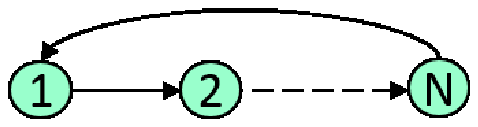} \label{DNgraph} }  
		
		\\
		
		\hspace*{-1.2em} 
		\subfloat [] {\includegraphics[width=4.8cm,height=3.1cm]{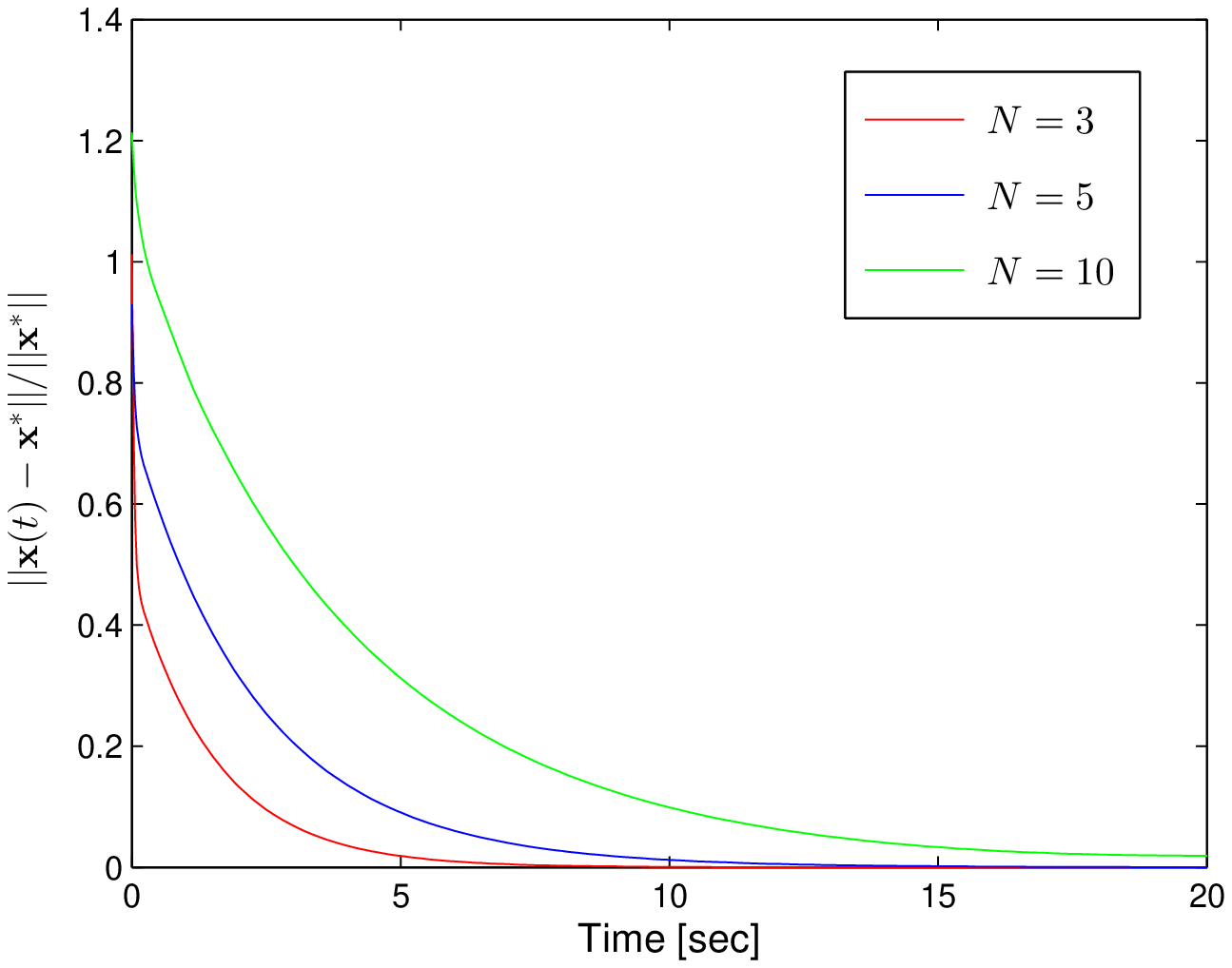} \label{Dn}}
		
		\subfloat []
		{\includegraphics[width=4.8cm,height=3.1cm]{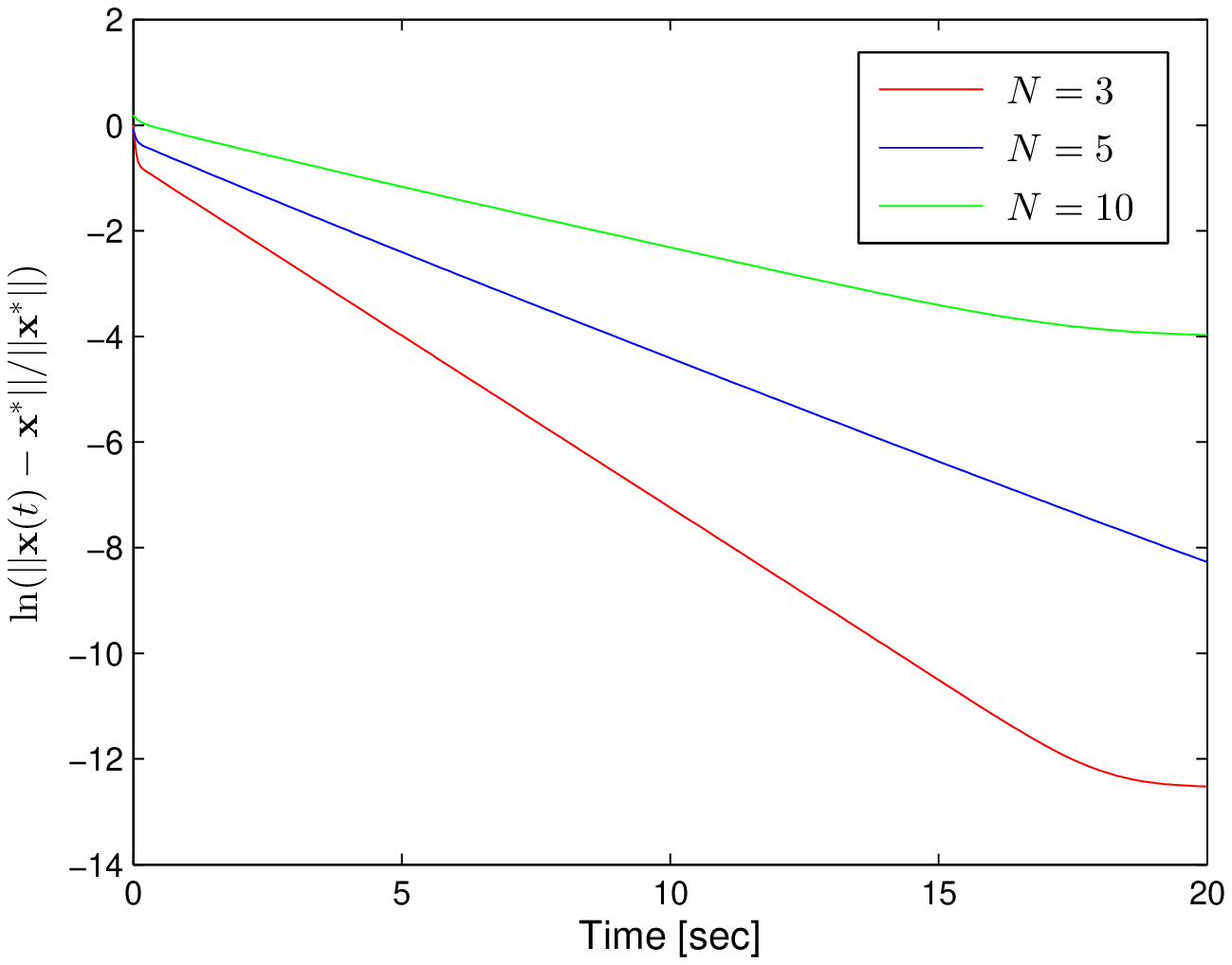} \label{DNLn} } 
		
	\end{tabular}	
	\caption{Performance of the proposed algorithm  (\ref{NEController1})-(\ref{NE2}) with different players: (a) the cycle digraph; (b) the relative error; and (c) its logarithmic curve. }	
	\label{ComparisionDifferentplayers} 
\end{figure}

\begin{figure}[!t]
	\centering
	\begin{tabular}{c}
		\hspace*{-1.2em}  
		\subfloat [] {\includegraphics[width=4.8cm,height=3.1cm]{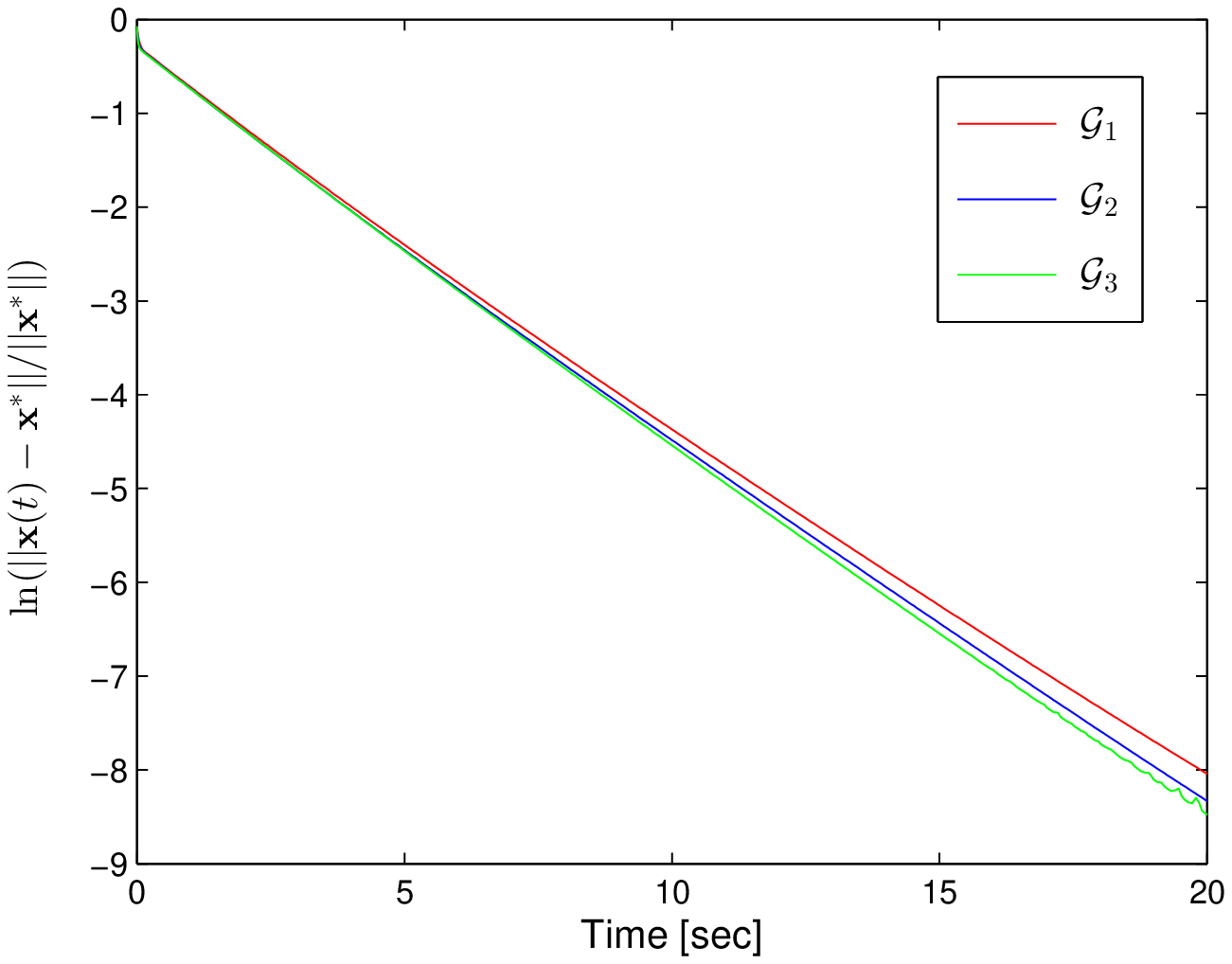} \label{Dcom}}
		\hspace*{-1.0em} 		
		\subfloat []
		{\includegraphics[width=5.0cm,height=2.8cm]{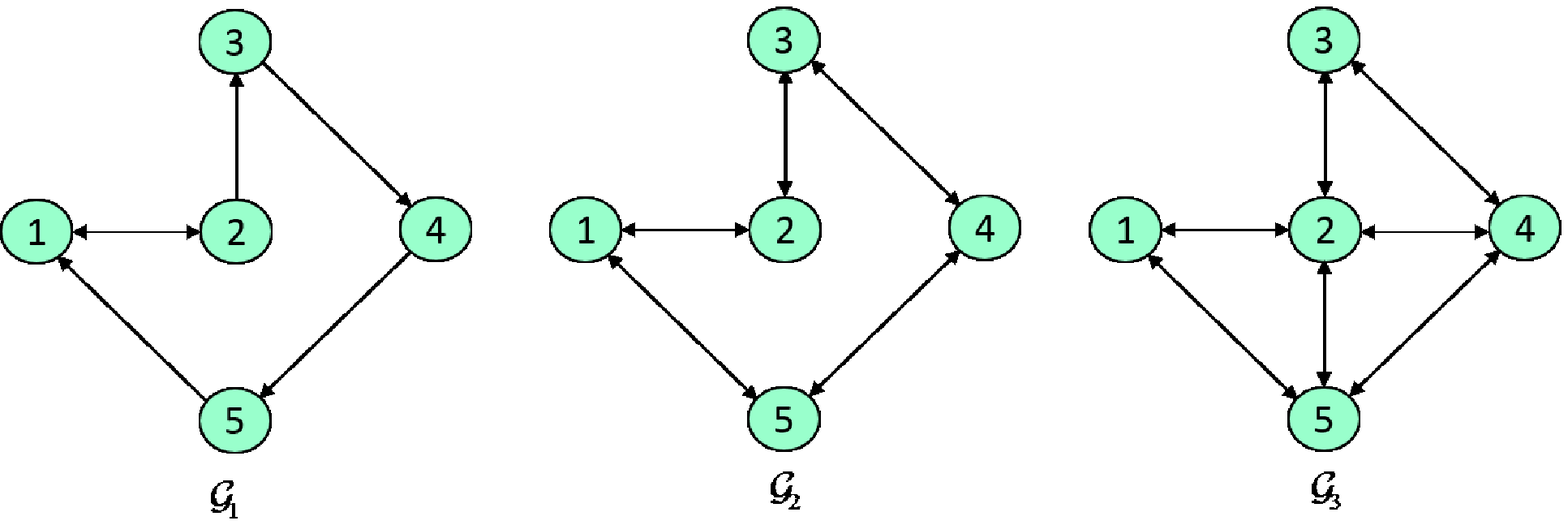} \label{Dgraph} }
	\end{tabular}	
	\caption{The logarithmic curve of the relative error produced by the proposed algorithm  (\ref{NEController1})-(\ref{NE2}) with different network graphs. }	%
	\label{ComparisionDifferentgraphs} 
\end{figure} 

\begin{figure}[!t]
	\centering
	\begin{tabular}{c}
		\hspace*{-1.2em}		
		\subfloat [Small attack requency and duration]
		{\includegraphics[width=4.8cm,height=3.0cm]{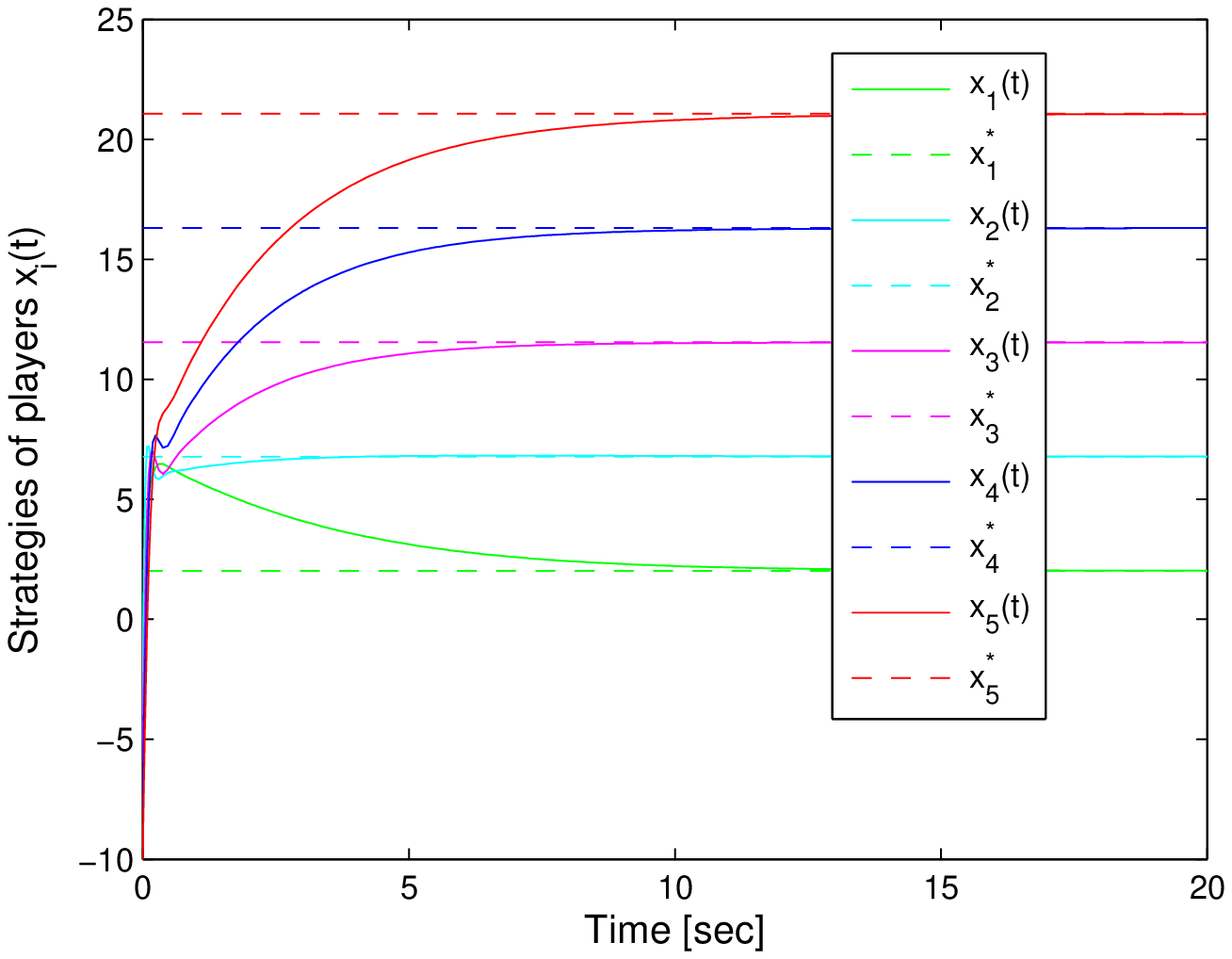} \label{A} } 
		
		\subfloat [Large attack requency and duration] {\includegraphics[width=4.8cm,height=3.0cm]{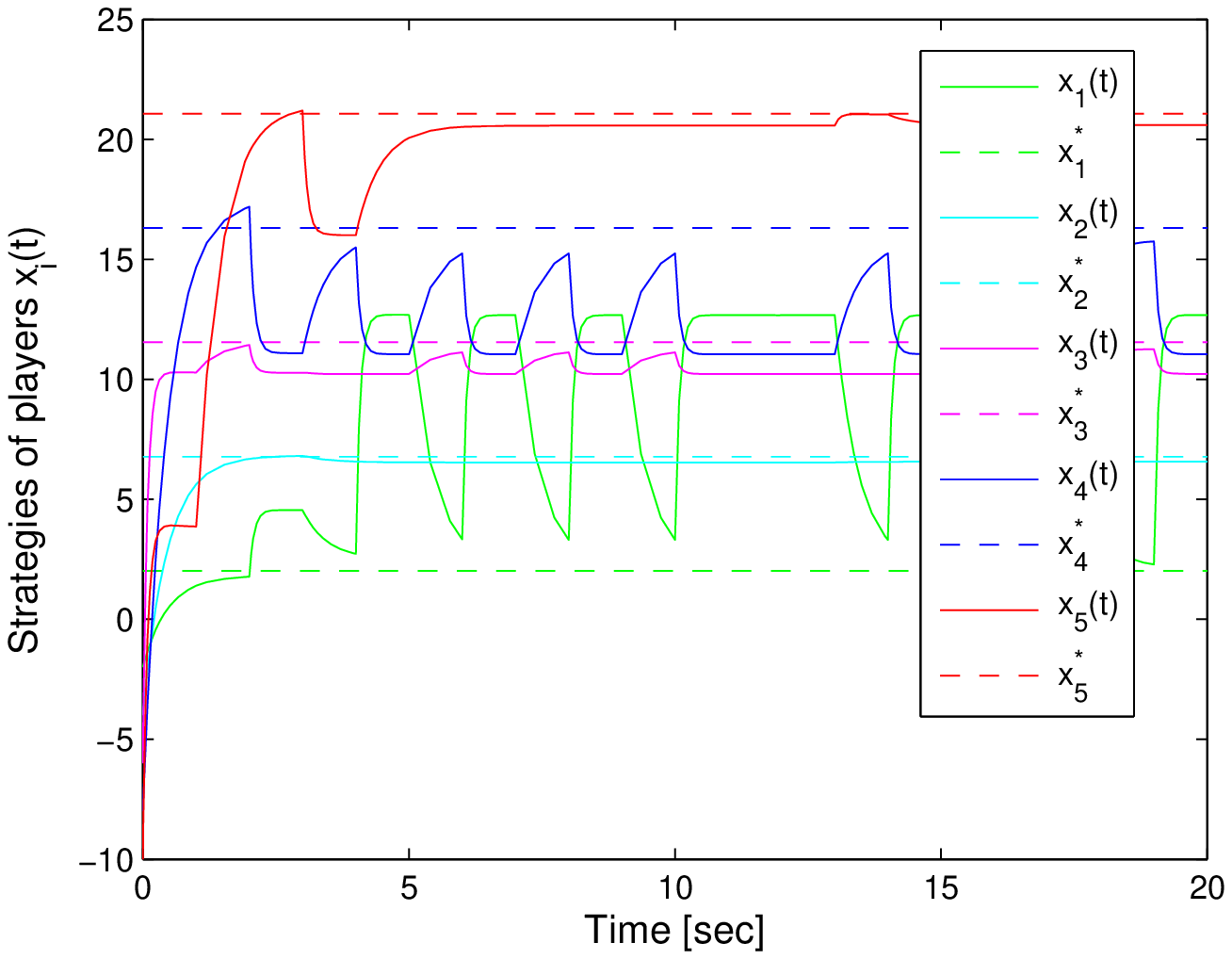} \label{A17}}
	\end{tabular}	
	\caption{The plot of players' strategies $x_{i}(t)$ produced by the proposed algorithm: (a) conditions (\ref{Condition1})-(\ref{Condition2}) hold; (b) conditions (\ref{Condition1})-(\ref{Condition2}) are not satisfied. }	
	\label{ComparisionAttacks} 
\end{figure}

\vspace{5pt}
\begin{example}
	(General Non-Quadratic Game)
\end{example}

In this example, we investigate a more general non-quadratic game, where the cost functions for each player $ i $ are given by 
\vspace{-3pt}
\begin{align}
J_{1}(x_{1},x_{-1})&= \frac{x^{2}_{1}}{2}+x_{1}\sum^{5}_{j=2}x_{j},   J_{2}(x_{2},x_{-2})= \frac{e^{\frac{x_{2}}{2}}}{2}+x_{2}x_{4}, \notag \\
J_{3}(x_{3},x_{-3})&=\frac{x^{2}_{3}}{2}+x_{1}^{3}, \
J_{4}(x_{4},x_{-4})= \ln(e^{x_{4}})+x^{2}_{4}+x^{3}_{3}, \notag \\
J_{5}(x_{5},x_{-5})&= x^{2}_{5}-5x_{5}+x^{3}_{1}x_{2} +x_{3}x^{4}_{4}, \ i=1,\cdots,5.
\end{align} 

Next, we perform the proposed algorithm for this non-quadratic game with the same simulation setting in the subsection V-A. By the calculation, the NE is $ x^{*} =\text{col} (-4.6589,4.1589,0,-2,2.5)$. The simulation result is shown in Fig. \ref{NonQuadraticGame}, and as we can see, under attacks, all players' estimates can reach consensus and converge to the NE exponentially.

\begin{figure}[!h]
	\centering
	\hspace{-0.5em}
	\includegraphics[width=9.0cm,height=9.0cm]{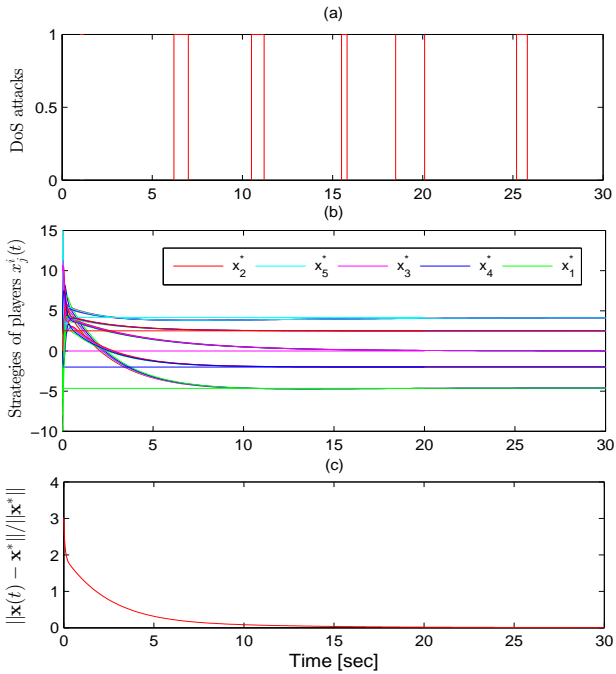}
	\caption{Simulated results of the proposed attack-resilient NE seeking algorithm in (\ref{NEController1})-(\ref{NE2}): (a) DoS attacks on various communication channels; (b) all players' estimated strategies $x^{i}_{j}(t), \ i, j\in \mathcal{V}$; and (c) relative errors of all players' actions.} 
	\label{NonQuadraticGame}
\end{figure}

\section{Conclusion} 
\vspace*{-2pt}
In this paper, an attack-resilient distributed algorithm has been presented for exponential NE seeking of non-cooperative games, where all players' strategies have been updated through a directed communication network subject to malicious DoS attacks. Under such an adversary network environment, the exponential convergence of the proposed distributed algorithm has been established through the explicit analysis of the attack frequency and duration. Moreover, in the absence of DoS attacks, the corollary has been provided, which can cover many existing results as special cases. The effectiveness of the developed approach has been illustrated by the numerical examples. Further work may consider distributed NE seeking problems for aggregative games with 
constraints.

\vspace*{-5pt}

\end{document}